\documentclass[12pt]{article}
\usepackage{longtable,supertabular}
\usepackage{amsmath}
\usepackage{amssymb}
\usepackage{latexsym}
\usepackage{a4wide,color}
\usepackage{multirow}
\usepackage{ntheorem}
\usepackage[square,numbers,sort]{natbib}
\usepackage[title]{appendix}
\usepackage{array}
\usepackage{makecell}

\usepackage{booktabs}

\usepackage{xcolor}
\usepackage[colorlinks=true,linkcolor=blue,citecolor=blue,urlcolor=blue]{hyperref}

\usepackage{float}

\newtheorem{theorem}{Theorem}[section]

\newtheorem{lemma}{Lemma}[section]
\newtheorem{definition}{Definition}[section]
\newtheorem{conjecture}[theorem]{Conjecture}
\newtheorem*{proof}{Proof.}

\newtheorem{example}{Example}[section]

\newtheorem{remark}{Remark}[section]

\newcommand{\F}{\mathbb{F}}
\newcommand{\M}{\mathbb{M}}

\newcommand{\ord}{\mathrm{ord}}

\newcommand{\bC}{{\mathbf{{C}}}}
\newcommand{\bF}{{\mathbf{{F}}}}

\newcommand{\ba}{{\mathbf{a}}}

\newcommand{\bc}{{\mathbf{c}}}

\newcommand{\bV}{{\mathbf{V}}}
\newcommand{\bM}{{\mathbf{M}}}
\newcommand{\bD}{{\mathbf{D}}}

\newcommand{\lcm}{{\mathrm{lcm}}}

\newcommand{\diag}{{\mathrm{diag}}}
\newcommand{\charr}{{\mathrm{char}}}
\newcommand{\Tr}{{\mathrm{Tr}}}
\newcommand{\N}{{\mathrm{N}}}

\newcommand{\bone}{{\mathbf{1}}}
\newcommand{\qed}{\hfill $\square$}

\title{\bf On the Minimum Distances of Some Families of BCH Codes}
\author{
	Yaqi Chen,  Hao Chen, Cunsheng Ding and Huimin Lao\thanks{Yaqi Chen is with the College of Cyber Security, Jinan University, Guangzhou, Guangdong Province, 510632, China.
		Hao Chen is with the College of Information Science and
		Technology, Jinan University, Guangzhou, Guangdong Province, 510632, China.
		Cunsheng Ding is with the Department of Computer Science and Engineering, The Hong Kong University of Science and Technology, Hong Kong, China.
		Huimin Lao is with Strategic Centre for Research in Privacy-Preserving Technologies and Systems, Nanyang Technological University, Singapore.
		(e-mail: chenyq@stu.jnu.edu.cn, haochen@jnu.edu.cn, cding@ust.hk,  huimin.lao@ntu.edu.sg).  The research of Hao Chen was supported by NSFC Grant 62032009. The research of C. Ding was supported by 
		the Hong Kong Research Grants Council under Grant No. 16301123.}
}

\begin{document}
	
	\maketitle
	\begin{abstract}
        
BCH codes form an important class of cyclic codes, which have applications in communication and data storage 
systems. Although the BCH bound provides a lower bound on the minimum distance of BCH codes, determining the true minimum distances of BCH codes is a very challenging problem. In this paper, we settle the minimum 
distances of a number  of infinite families of narrow-sense BCH codes. \\

        By explicitly constructing the locator polynomials for minimum weight codewords, we obtain many families of primitive and non-primitive BCH codes with $d=\delta$, where $d$ is the minimum distance of a $q$-ary BCH code of length $n$, designed distance $\delta$, and offset $b$, denoted by $\mathbf{C}_{(q, n, \delta, b)}$. 
        For primitive BCH codes, we obtain infinite families of BCH codes over $\F_3$ and $\F_4$ satisfying $d=\delta$, where $\delta \in \{5,6,7,8\}$.
        Moreover, we construct several infinite families of $q$-ary BCH codes with $d=\delta$, where $2 \le \delta \le q-1$.
        For $\delta=q^t+1$, we prove that the BCH code $\bC_{(q, q^m-1, q^t+1, 1)}$ has $d=\delta$ for all $m$ satisfying $m \equiv 0 \pmod{pt}$, where $p$ denotes the characteristic of $\F_q$.
        In the paper by Ding et al., IEEE Trans. Inf. Theory 61(5): 2351-2356, it was conjectured that the minimum distance of $\bC_{(q, q^m-1, q^t+1, 1)}$ is always equal to its Bose distance $d_B$.
        Our result confirms this conjecture for the case $m \equiv 0 \pmod{pt}$.
        For non-primitive BCH codes, we construct a family of BCH codes   $\bC_{(q,\frac{q^p-1}{\lambda},p+1,1)}$ with $d=\delta=p+1$, where $p$ is an odd prime, $q=p^e$ with $p \nmid e$ and $\lambda \mid q-1$. \\

\noindent 
		{\bf Index terms:} BCH code, Cyclic code,  Linear code. 
	\end{abstract}
	
	\newpage

\section{Introduction}\label{s1}

\subsection{Introduction of basic concepts}

Throughout this paper, let $\F_q$ denote the finite field with $q$ elements, where $q$ is a power of a prime, and let $\F_q^*$ denote the multiplicative group of $\F_q$. The Hamming weight $wt(\ba)$ of a vector $\ba = (a_0, \ldots, a_{n-1}) \in \F_q^n$ is the cardinality of its support,
$$\mathrm{supp}({\bf a}):=\{i: a_i \neq 0\}.$$
The Hamming distance $d({\bf a}, {\bf b})$ between two vectors ${\bf a}$ and ${\bf b}$ is $d({\bf a}, {\bf b})=wt({\bf a}-{\bf b})$. 
Any subset ${\bf C} \subseteq {\bf F}_q^n$ is called a $q$-ary code of length $n$.  
The minimum distance of a code ${\bf C} \subseteq {\bf F}_q^n$ is defined as 
$$d({\bf C}):=\min_{{\bf a} \neq {\bf b}} \{d({\bf a}, {\bf b}),  {\bf a} \in {\bf C}, {\bf b} \in {\bf C} \}.$$
A $q$-ary linear code $\bC$ with parameters  $[n, k, d]_q$ is a $k$-dimensional linear subspace of $\mathbb{F}_q^n $ with minimum distance $d$. The dual code of $\bC$ is defined as
\[ \bC^\perp := \{\mathbf{b} \in \mathbb{F}_q^n\,|\,\mathbf{b} \cdot \mathbf{c}^\mathrm{T}=\mathbf{0},\,\forall \mathbf{c} \in \bC\}. \]

	Let $\bC$ be an $[n, k, d]_q$ linear code.
	We say that $\bC$ is distance-optimal if no $[n, k, \geq d+1]_q$ linear code exists.
	If there is an  $[n, k, d+1]_q$ linear code and there is no $[n, k, \geq d+2]_q$ linear code, then $\bC$ is said 
	to be almost distance-optimal.
	Similarly, if there is an $[n, k, d+2]_q$ linear code, but there is no $[n, k,  \geq d+3]_q$ linear code, then $\bC$ is called a near distance-optimal code.

    Cyclic codes, first introduced by Prange in \cite{Prange}, are a significant class of linear error-correcting codes due to their structural properties and efficient decoding algorithms. An $[n,k,d]_q$ linear code $\bC$ is said to be cyclic if $(c_0, c_1, \ldots, c_{n-1}) \in \bC $ implies $(c_{n-1}, c_0,\ldots, c_{n-2}) \in \bC$.
    Each vector $(c_0, c_1, \ldots, c_{n-1}) \in \F_q^n$ can be identified with the polynomial $$c_0 + c_1x + \cdots + c_{n-1}x^{n-1} \in \F_q[x]/(x^n-1).$$ 
With this identification, any cyclic code ${\bf C} \subseteq {\bf F}_q^n$ is an ideal of the ring $\F_q[x]/(x^n-1)$.    
    Note that every cyclic code is a principal ideal in the ring ${\F}_q[x]/(x^n-1)$, generated by a monic polynomial $g(x)$ of the smallest degree,  which divides $x^n-1$.  This polynomial $g(x)$ is called the generator polynomial and $h(x)=(x^n-1)/g(x)$ is called the parity-check polynomial of the cyclic code $\bC$.

    Let $n>1$ be an integer such that $\gcd(n,q)=1$ and let $m=\ord_n(q)$ denote the smallest integer satisfying $q^m \equiv 1 \pmod{n}$. Let $\alpha$ be a primitive element of $\F_{q^m}$, then $\beta=\alpha^{\frac{q^m-1}{n}}$ is a primitive $n$-th root of unity. This leads to the factorization $x^n-1=\prod_{i=0}^{n-1}(x-\beta^i)$ over $\F_{q^m}$. Let $\M_i(x)$ denote the minimal polynomial of $\beta^i$ over $\F_q$ for $0 \le i \le n-1$, which depends on the choice of the primitive element $\alpha$ of 
    $\bF_{q^m}$.

    BCH (Bose-Chaudhuri-Hocquenghem) codes are subclasses of cyclic codes (\cite{BC1,BC2,Hoc}).
    A cyclic code of length $n$ over $\F_q$ is called a BCH code with designed distance $\delta$ if its generator polynomial takes the form
    $$g(x)=\lcm(\M_b(x),\M_{b+1}(x),\dots,\M_{b+\delta-2}(x))$$
    for some integers $b$ and $2 \le  \delta \le n$, where $\lcm$ denotes the least common multiple of polynomials.  
    Throughout this paper, we denote by $\bC_{(q,n,\delta,b)}$ the BCH code over $\F_q $ with length $n$, designed distance $\delta$ and offset $b$. When $b = 1$, the code $\bC_{(q,n,\delta,b)}$ is referred to as a narrow-sense BCH code. If $n = q^m - 1$, $\bC_{(q,n,\delta,b)}$ is called a primitive BCH code.

By definition, the BCH code $\bC_{(q,n,\delta,b)}$ depends on the choice of $\beta$ and $\alpha$. However, the parameters 
(i.e., the length, the dimension and the minimum distance) of $\bC_{(q,n,\delta,b)}$ are independent of the choice of the $n$-th root of unity $\beta$.  The defining set of $\bC_{(q,n,\delta,b)}$ with respect to $\beta$ is defined by
    $$T=\{0 \le i \le n-1: g(\beta^i)=0 \},$$ 
which also depends on the choice of $\alpha$ and $\beta$. 
    
\subsection{The motivations of this paper}    

BCH codes have been a hot research topic in the past 70 years due to the following:
\begin{itemize}
\item BCH codes are closely related to a number of areas of mathematics such as algebra, algebraic number theory, 
algebraic function fields, algebraic geometry, combinatorics,  elementary number theory, finite fields,  graph theory and group theory.  
\item Some BCH codes are widely used in communication and data storage systems.  
\end{itemize} 
Although a lot of progress in the study of BCH codes has been made in the past 70 years \cite{Ding1},  the minimum distance of only a small number of families of BCH codes is settled.   

Throughout this paper,  let $i \bmod{n}$ denote the unique integer $r$ with $0 \leq r \leq n-1$ such that 
$i-r \equiv 0 \pmod{n}$. 
    If there exist integers $b$ and $\delta$ with $2 \le \delta \le n$ satisfying $\{(i+b) \bmod{n} : 0 \le i \le \delta - 2\} \subseteq  T$, where $T$ is the defining set of  $\bC_{(q,n,\delta,b)}$, then the minimum distance of $\bC_{(q,n,\delta,b)}$ is at least $\delta$. This is known as the BCH bound \cite{BC1}.  This BCH bound provides a lower bound on the minimum distance,  which could be tight for certain types of BCH codes and 
    could be faraway from the true minimum distance for some types of BCH codes.      
    It is known that $\bC_{(q,n,\delta_1,b)}=\bC_{(q,n,\delta_2,b)}$ may hold for some $\delta_1$ and 
    $\delta_2$ with $\delta_1 \ne \delta_2$. The largest designed distance defining a BCH code is called the Bose distance $d_B$ \cite[p. 171]{HP}. Thus we have $d \ge d_B \ge \delta$, which provides an approximation to the true minimum distance. 
    
    Determining the true minimum distances of BCH codes is a notoriously hard problem.  Below is a list of results on the true minimum distances of some BCH codes.

    \begin{itemize}
    	\item The code $\bC_{(2,2^m-1,\delta,1)}$ with $\delta=2t+1$ has minimum distance $d=\delta$ if either $\sum_{i=0}^{t+1} \binom{2^m-1}{i} > 2^{mt}$ or $m > \log_2(t+1)! + 1$ \cite[p. 259]{MScode}.
    	
    	\item The code $\bC_{(q,q^m-1,\delta,1)}$ has $d=\delta$ if $\delta=q^l-1$ for $1 \le l \le m-1$ 
    	(\cite{Peterson} and \cite[p. 260]{MScode}).
    	
    	\item The code $\bC_{(2,2^m-1,\delta,1)}$ has $d=\delta$ if $\delta=2^{m-1-s}-2^{m-1-s-i}-1$ for $1 \le i \le m-s-2$ and $1 \le s \le m-2i$ \cite{KL}.   
    	
   	\item The code  $\bC_{(q,q^m-1,\delta,1)}$ with $\delta=q^m-q^{m-1}-q^i-1$  has $d=\delta$ if $i$ is a nonnegative integer satisfying $\frac{m-2}{2} \le i \le m - \left \lfloor \frac{m}{3} \right \rfloor -1 $ \cite{LS}. This result has generalized Berlekamp's earlier work in \cite{Berlekamp}.

    	    	\item If a BCH code $\bC_{(q,q^m-1,\delta,1)}$  has $d=\delta$ and $\delta+1$ is divisible by $p$, where $q$ is a power of a prime $p$,  then a code $\bC_{(q,q^m-1,\delta',1)}$
    	    	with designed distance $\delta'=(\delta+1)q^{m-h}-1$ and $h \ge \delta$ also has minimum distance $\delta'$ [Note that the condition that $\delta +1$ be divisible by $p$ is always satisfied by all binary codes $\bC_{(2,2^m-1,\delta,1)}$] \cite{Peterson}.

    	\item The code $\bC_{(2,2^m-1,\delta,b)}$ has $d=\delta$ if $\delta$ divides $2^m-1$ and $b \geq 0$  
    	 \cite[Thm.3, Chapter 9]{MScode}.

    	\item The code $\bC_{(q,n,\delta,b)}$ has $d=\delta$ if $\delta$ divides $\gcd(n,b-1)$ \cite{LiLiDing17cc}.   	 
    	
    	\item Any narrow-sense BCH code of length $n$ over $\F_q$ has $d=\delta$ whenever $\delta$ divides $n$ \cite[p. 247]{Betten}.

    \end{itemize}

A necessary and sufficient condition for $d\in \{3,4\}$ for the code $\bC_{(q,q+1,3,h)}$ and for the code 
$\bC_{(q, q^m+1, 3, h)}$ were presented in  \cite{Cao1} and \cite{Cao2}, respectively.  The minimum distance of some other families of BCH codes has also been settled \cite{Ding1,LS}.

In \cite{Ding0},  the authors determined the Bose distances of a class of narrow-sense primitive BCH codes $\bC_{(q, q^m-1, q^t+1,1)}$ and presented the following conjecture.
	
	\begin{conjecture}\label{c-0}
		The minimum distance $d$ of the BCH code $\bC_{(q, q^m-1, q^t+1,1)}$ is always equal to its Bose distance $d_B$, which is given by, $$d_B = \left\lfloor \frac{q^m - 2}{q^{m-t} - 1} \right\rfloor + 1.$$
	\end{conjecture}

   The minimum distance $d$ has been proven to equal the Bose distance $d_B$ only in some specific cases \cite[Thm. 13]{Ding0}.  When $t \equiv 0 \pmod{m-t}$, $d=d_B=\frac{q^m - 1}{q^{m-t} - 1}$. When $m \equiv 0 \pmod{2t}$, $d=d_B=\delta=q^t+1$.
   In both cases, we have $d_B \mid n$.   
   Recently, another special case of Conjecture \ref{c-0} was proved by Shany and Berman \cite{S2025}, who confirmed the conjecture for the case that $q=2$, $m \ge 4$ and $t=m-2$. 
   However,  the true minimum distance of this code is still open in other subcases of the case $\delta \nmid n$.   
   
In short, the motivations of this paper are the following:
\begin{itemize}
\item BCH codes are very important in theory and applications. 
\item The true minimum distance of only a small number of families of BCH codes is known. 
\item Conjecture \ref{c-0} is still open in certain cases. 
\end{itemize}   
	
\subsection{The contributions of this paper}\label{sec-contribu}

    In this paper, we settle the minimum distances of some families of BCH codes 
    by explicitly determining the locator polynomial of a minimum weight codeword.  
    Specifically, we prove that  the minimum distances of the following infinite families of BCH codes are equal to 
    their designed distances.

	\begin{itemize}
		\item 
		The ternary primitive BCH codes $\bC_{(3, 3^m-1, \delta, 1)}$ with $\delta=5,7,8$ and the quaternary primitive BCH codes $\bC_{(4, 4^m-1, \delta, 1)}$ with $\delta=5,6,7$,  see Theorem \ref{T-ternary} and Theorem \ref{T-quaternary}.

		\item
		The $q$-ary primitive BCH codes $\bC_{(q, q^m-1, \delta, 1)}$ with  
		$2 \le \delta \le q-1$,  see Theorem \ref{T-delta<q}.
		
		\item
	The	$q$-ary primitive BCH codes  $\bC_{(q, q^m-1, \delta, 1)}$ with $\delta=q^t+1$,  see Theorem \ref{T-q^t+1}.  The study of this family of codes confirms  Conjecture \ref{c-0} in the case $m \equiv 0 \pmod{pt}$, where $p$ is the characteristic of $\F_q$.
		
		\item
	The $q$-ary non-primitive BCH codes $\bC_{(q, n, \delta, 1)}$ with $\delta=p+1$  and length $n=\frac{q^p-1}{\lambda}$,
		where $p$ is an odd prime, $q=p^e$ with $p \nmid e$ and $\lambda \mid q-1$, see Theorem \ref{T-p+1/lambda}.
	\end{itemize}

The reader is informed that the minimum distance of the codes  above is known in the case  
that $\delta$ is a factor of the length $n$ \cite{Betten} (we call it the trivial case).  
In the following table, we list these trivial cases and our cases.  The results in our cases are new.

\begin{longtable}{|l|l|l|l|}
		\caption{\label{tab:s1} The BCH codes with $d=\delta$ studied in this paper} \\ \hline
		Theorem & The codes & Our cases & Trivial cases ($\delta \mid n$) \\ \hline
		\endfirsthead
		
		\hline
		\endlastfoot
		
		% --- Theorem 4.1 ---
		\multirow{3}{*}{Thm \ref{T-ternary}} & \multirow{3}{*}{$\bC_{(3, 3^m-1, \delta, 1)}$}
		&  $2 \mid m$ or  $3 \mid m$ for $\delta=5$ & $4 \mid m$ for $\delta=5$\\ \cline{3-4}
		& & $3 \mid m$ or $4 \mid m$ for $\delta=7$ & $6 \mid m$ for $\delta=7$\\ \cline{3-4}
		& & $3 \mid m$ for $\delta=8$ & $2 \mid m$ for $\delta=8$\\ \hline
		
		% --- Theorem 4.2 ---
		\multirow{3}{*}{Thm \ref{T-quaternary}} & \multirow{3}{*}{$\bC_{(4, 4^m-1, \delta, 1)}$}
		& $2 \mid m$ or $3 \mid m$ for $\delta=5$ & $2 \mid m$ for $\delta=5$ \\ \cline{3-4}
		& & $2 \mid m$ or $3 \mid m$ for $\delta=6$ & Never holds \\ \cline{3-4}
		& & $2 \mid m$ or $3 \mid m$ for $\delta=7$ & $3 \mid m$ for $\delta=7$ \\ \hline
		
		% --- Theorem 4.3 ---
		Thm \ref{T-delta<q} & $\bC_{(q, q^m-1, \delta, 1)}$ & $m \ge 1$,  $2 \le \delta \le q-1$ & $\delta \mid q-1$ \\ \hline

		% --- Theorem 4.5 ---
		Thm \ref{T-q^t+1} & $\bC_{(q, q^{m}-1, q^t+1, 1)}$ &
		\begin{tabular}[c]{@{}l@{}} $p=\charr(\F_q)$, \\$t \ge 1$, $m/pt$ is odd \end{tabular} & \begin{tabular}[c]{@{}l@{}}  $p=\charr(\F_q)$, \\$m/pt$ is even \end{tabular}\\ \hline
		
		% --- Theorem 5.1 ---
		Thm \ref{T-p+1/lambda} & $\bC_{(q, \frac{q^p-1}{\lambda}, p+1, 1)}$ &
		\begin{tabular}[c]{@{}l@{}} $p$ is an odd prime, $e \ge 1$, \\ $q=p^e$, $p \nmid e$,  $\lambda \mid q-1$ \end{tabular} &
		\begin{tabular}[c]{@{}l@{}} $p$ is an odd prime, \\$2 \mid e$,  $\lambda \mid \frac{q-1}{p+1}$ \end{tabular}   \\ \hline
		
	\end{longtable}

\subsection{The organisation of this paper}

    The rest of the paper is organized as follows.
    Section \ref{s2} recalls the necessary preliminaries on cyclotomic cosets and locator polynomials.
    Section \ref{s3} introduces some auxiliary results including the inverse of the modified Vandermonde matrix.
    Section \ref{s4} presents some families of primitive BCH codes with $d=\delta$.
    Section \ref{s5} proves that a family of non-primitive BCH codes has the property that $d=\delta$.
    Section \ref{s6} concludes the paper.

\section{Preliminaries}\label{s2}
\subsection{Cyclotomic cosets}

    Let $n$ be a positive integer and ${\mathbb{Z}}_n={\mathbb{Z}}/n{\bf \mathbb{Z}}=\{0, 1, \ldots, n-1\}$ denote the set of residue classes modulo $n$. For any $i \in  \mathbb{Z}_n$, the $q$-cyclotomic coset of $i$ modulo $n$ is defined by
    $$C_i=\{ iq^j \bmod n : 0 \le j \le l_i-1\},$$
    where $l_i$ is the smallest positive integer such that $iq^{l_i} \equiv i \pmod n$. The smallest integer in $C_i$ is called the coset leader of $C_i$. The set of all $q$-cyclotomic cosets modulo $n$ forms a partition of ${\mathbb{Z}}_n$.

    Let $m=\ord_n(q)$ and let $\beta \in  \F_{q^m}$ be an $n$-th primitive root of unity. The minimal polynomial $\M_i(x)$ of $\beta^i$ over $\F_q$ is irreducible over $\F_q$ and is given by
    $$\M_i(x)=\prod_{j \in C_i}(x-\beta^j) \in \F_q[x].$$ 
    For a BCH code $$\bC_{(q,n,\delta,b)}=\left \langle g(x) \right \rangle \subseteq {\F_q[x]/(x^n-1)},$$
    the generator polynomial $g(x)$ is the product of some of these minimal polynomials $\M_i(x)$. Consequently, the degree of $g(x)$ and the dimension of $\bC_{(q,n,\delta,b)}$ can be determined by the sizes  of the cyclotomic cosets associated with $g(x)$.

    The following two lemmas concern the sizes of cyclotomic cosets.

    \begin{lemma}\cite{HP}
        The size of each $q$-cyclotomic coset $C_i$ modulo $n$ is a divisor of $\ord_n(q)$. In particular, $|C_1|=\ord_n(q)$.
    \end{lemma}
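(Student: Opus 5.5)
The plan is to view the $q$-cyclotomic coset of $i$ as an orbit of the cyclic group generated by multiplication by $q$ acting on $\mathbb{Z}_n$. Since $\gcd(n,q)=1$, the map $\sigma: x \mapsto qx \bmod n$ is a permutation of $\mathbb{Z}_n$. Its order as a permutation equals the least $j\ge 1$ with $q^j x\equiv x \pmod n$ for all $x$. Taking $x=1$ shows this is exactly $\ord_n(q)=m$. Hence $\langle \sigma\rangle$ is a cyclic group of order $m$, and $C_i$ is precisely the orbit of $i$ under this group.

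Next I would show that $|C_i|=l_i$ and $l_i \mid m$. The elements $iq^0, iq^1, \dots, iq^{l_i-1}$ are pairwise distinct modulo $n$. Indeed, if $iq^a\equiv iq^b$ with $0\le a<b\le l_i-1$, multiplying by $q^{-a}$ (which exists since $\gcd(q,n)=1$) gives $iq^{b-a}\equiv i$ with $0<b-a<l_i$, contradicting the minimality of $l_i$. So $|C_i|=l_i$. For divisibility, write $m = s l_i + r$ with $0\le r<l_i$. Using $iq^m\equiv i$ (because $q^m\equiv 1$) and $iq^{sl_i}\equiv i$ (by iterating), we get $iq^r\equiv i$. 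Minimality of $l_i$ then forces $r=0$. Equivalently, $l_i$ is the orbit size, which divides the group order $m$ by the orbit–stabilizer theorem.

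For the particular case $i=1$, $l_1$ is the least positive $j$ with $q^j\equiv 1\pmod n$, which is $\ord_n(q)$ by definition. So $|C_1|=\ord_n(q)$.

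There is no real obstacle here. The only point needing care is invertibility of $q$ modulo $n$, which is what makes the listed elements distinct and lets the cancellation step go through. This relies on the standing assumption $\gcd(n,q)=1$.
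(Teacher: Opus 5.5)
Your proof is correct and complete. The paper gives no proof of this lemma; it is quoted from \cite{HP}, so there is no in-paper argument to compare against.

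Your argument works directly in $\mathbb{Z}_n$. Cancelling by $q^{-1}$ gives $|C_i|=l_i$. The division algorithm applied to $iq^m\equiv i$ gives $l_i\mid m$. Finally, $l_1=\ord_n(q)$ holds by definition. The division-algorithm step is enough on its own, so the orbit--stabilizer framing and the computation of the order of $\sigma$ are optional. If you keep them, note that taking $x=1$ only shows the order is at least $m$; the converse is the trivial fact that $q^m x\equiv x$ for all $x$.

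A common alternative, natural in the paper's setting, is field-theoretic. Since $\M_i(x)=\prod_{j\in C_i}(x-\beta^j)$ is the minimal polynomial of $\beta^i\in\F_{q^m}$ over $\F_q$, one gets $|C_i|=\deg \M_i(x)=[\F_q(\beta^i):\F_q]$. This degree divides $[\F_{q^m}:\F_q]=m$. Also $|C_1|=m$, because $\F_q(\beta)=\F_{q^m}$ when $m=\ord_n(q)$.

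The field-theoretic route is shorter once the minimal-polynomial description is available. It also links $|C_i|$ directly to $\deg g(x)$, which is how the lemma feeds into the dimension formulas. However, it relies on the Galois theory of finite fields. Your argument needs nothing beyond $\gcd(n,q)=1$, and it also makes explicit that $|C_i|=l_i$.
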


    \begin{lemma}\label{L-2-size}\cite{AA2007}
        Suppose $q^{\lceil m/2 \rceil} < n \le q^m - 1$, where $m=\ord_n(q)$.
        Then for all $1 \le i \le n q^{\lceil m/2 \rceil}/(q^m - 1)$, $|C_i|=m$. 
        In addition, every $i$ with
$i \not\equiv 0 \pmod{q}$ in this range is a coset leader.
    \end{lemma}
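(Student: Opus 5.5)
The plan is to reduce the claim to an elementary statement about base-$q$ expansions, working modulo $q^m-1$. Write $k=\lceil m/2\rceil$ and $N=q^m-1$. Since $\ord_n(q)=m$ and $n\mid N$, put $r=N/n$. Multiplication by $r$ is an injective map from $\mathbb{Z}_n$ into $\mathbb{Z}_N$, and it is compatible with multiplication by $q$: we have $iq^j\equiv i \pmod n$ if and only if $riq^j\equiv ri\pmod N$. So the $q$-cyclotomic coset of $i$ modulo $n$ has the same size as the coset of $ri$ modulo $N$. The range $1\le i\le nq^{k}/N$ becomes $1\le s:=ri\le q^{k}$. The problem therefore reduces to the primitive length $N$, with $1\le s\le q^k$.

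For the size claim, suppose $sq^j\equiv s\pmod N$ with $1\le j<m$. Then $s(q^j-1)\equiv 0\pmod{q^m-1}$, so $s\cdot\gcd(q^j-1,q^m-1)=s(q^{g}-1)$ is divisible by $q^m-1$, where $g=\gcd(j,m)\le m/2$. Hence $(q^m-1)/(q^g-1)$ divides $s$. This quotient is at least $q^{m-g}+\dots+1>q^{m/2}$, which already rules out most $s$. The boundary case $s=q^k$ needs care: if $m$ is odd then $q^k$ is too small, and if $m$ is even with $g=m/2$ the quotient is $q^{m/2}+1>q^k$. In either case a positive multiple of the quotient exceeds $q^k$, a contradiction. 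So $|C_s|=m$, and hence $|C_i|=m$ modulo $n$.

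For the coset-leader claim, take $i\not\equiv 0\pmod q$. We must show that $i\le (iq^j \bmod n)$ for all $j$. Multiplying by $r$, this is equivalent to $s\le (sq^j \bmod N)$, which is a statement about the cyclic shift of the $m$-digit base-$q$ expansion of $s$. Since $s\le q^k$, the top $m-k$ digits of $s$ are zero (apart from the edge case $s=q^k$). A cyclic shift by $j$ places nonzero digits among the top positions, or wraps the low digits upward. The main obstacle is turning this into the inequality $(sq^j\bmod N)\ge s$ for every $j$.

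Here is how I would handle it. Write $s=\sum_{u<k'} s_u q^u$ with $s_0\ne 0$. This requires $q\nmid s$, whereas we only know $q\nmid i$. Since $\gcd(r,q)=1$, however, $q\nmid i$ gives $q\nmid ri=s$, so $s_0\neq 0$. After a shift by $j$, the digit $s_0$ lands in position $j$. If $j\ge k$, the shifted value is at least $q^j\ge q^k\ge s$, and equality forces the trivial case. If $j<k$, the wrapped part sits in positions $\ge m-k+j$, and the nonzero low digit $s_0$ moves to position $j$. Because $m-k\ge k-1$, the top nonzero digit of $s$, shifted by $j$, lands at position at least $j+$ (the degree of $s$), which is at least that degree. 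I would compare leading digits, checking separately the case where the degree is $k$, that is, $s=q^k$; this forces $i$ to be a multiple of $q$ unless $r$ absorbs it. Finally, dividing by $r$ gives the stated claim modulo $n$.
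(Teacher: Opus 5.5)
The paper gives no proof of this lemma; it is quoted from \cite{AA2007}. Your argument is correct and follows the standard route for this result. You rescale by $r=(q^m-1)/n$ to pass to length $q^m-1$ and the range $1\le s\le q^k$. You then use $(q^m-1)/(q^g-1)\mid s$ for the size claim, and the base-$q$ rotation picture of $s\mapsto sq^j \bmod (q^m-1)$ for the coset-leader claim.

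A few steps should be tightened.

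For odd $m$, the bound $(q^m-1)/(q^g-1)>q^{m/2}$ does not leave just the single value $s=q^k$; it leaves the whole interval $q^{m/2}<s\le q^k$. Your remark that ``$q^k$ is too small'' is asserted without a reason. The one-line fix works for both parities: $g$ is a proper divisor of $m$, so $g\le\lfloor m/2\rfloor$. Hence $m-g\ge k$, and $(q^m-1)/(q^g-1)\ge q^{m-g}+1\ge q^k+1$.

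In the coset-leader claim, for $1\le j<k$ there is no wrapped part at all. Since $q\nmid s$ gives $s\le q^k-1$, you get $sq^j\le q^{2k-1}-q^{k-1}<q^m-1$, and so $sq^j\bmod(q^m-1)=sq^j>s$. The case $s=q^k$ needs no separate discussion, because $q\nmid s$ already excludes it; $r$ is coprime to $q$ and cannot ``absorb'' a power of $q$.

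Finally, you never invoke the hypothesis $q^{\lceil m/2\rceil}<n$. You do, however, implicitly use $q^k<q^m-1$, so that $s<q^m-1$ and the rotation description of $sq^j \bmod (q^m-1)$ applies. This needs $m\ge 2$, which is exactly what the hypothesis supplies, since $m=1$ would force $q<n\le q-1$.
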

	
	As a direct consequence of Lemma \ref{L-2-size}, the dimension of the BCH code  $\bC_{(q,n,\delta,1)}$ can be obtained when $\delta$ is small compared to $n$. Specifically, we have the following result.

	\begin{lemma}\label{L-dim}\cite{AA2007}
		Suppose $q^{\lceil m/2 \rceil} < n \le q^m - 1$, where $m=\ord_n(q)$. The narrow-sense BCH code 
		$\bC_{(q,n,\delta,1)}$ with $\delta$ in the range $2 \le \delta \le \min \{\lfloor  n q^{\lceil m/2 \rceil}/(q^m - 1) \rfloor , n  \}$ has dimension
		$$
		k=n-m\lceil (\delta-1)(1-1/q)  \rceil.
		$$
	\end{lemma}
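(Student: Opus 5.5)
The statement is Lemma \ref{L-dim}: it gives the dimension of $\bC_{(q,n,\delta,1)}$ in the range where Lemma \ref{L-2-size} applies. The plan is to count the defining set directly. The generator polynomial is $g(x)=\lcm(\M_1(x),\dots,\M_{\delta-1}(x))$, so the defining set is $T=\bigcup_{i=1}^{\delta-1} C_i$. Hence
$$k=n-\deg g(x)=n-|T| ,$$
and the task reduces to computing $|T|$.

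First, partition $T$ into distinct cyclotomic cosets, each represented by its coset leader. The hypothesis $\delta \le \lfloor nq^{\lceil m/2\rceil}/(q^m-1)\rfloor$ places every $i$ with $1\le i\le \delta-1$ inside the range covered by Lemma \ref{L-2-size}. That lemma gives $|C_i|=m$ for all such $i$.

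Second, we need to know which of the $C_i$ with $1\le i\le\delta-1$ are distinct. By Lemma \ref{L-2-size}, every $i$ in this range with $q\nmid i$ is a coset leader. Two distinct coset leaders lie in distinct cosets, so these cosets are pairwise distinct.

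If $q\mid i$, write $i=q^a j$ with $q\nmid j$. Then $j<i\le \delta-1$ and $C_i=C_j$, so $C_i$ is already counted. Consequently $T$ is the disjoint union of the cosets $C_i$ with $1\le i\le \delta-1$ and $q\nmid i$, each of size $m$. Therefore
$$|T| = m\cdot \#\{1\le i\le \delta-1:\ q\nmid i\}.$$

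The remaining step is elementary arithmetic:
$$\#\{1\le i\le \delta-1: q\nmid i\} = (\delta-1)-\lfloor (\delta-1)/q\rfloor = \lceil (\delta-1)(1-1/q)\rceil .$$
The last equality holds because $x-\lfloor x/q\rfloor=\lceil x-x/q\rceil$ for every integer $x$. To see this, write $x=sq+r$ with $0\le r<q$. Then $x-\lfloor x/q\rfloor = s(q-1)+r$, while $x-x/q = s(q-1)+r-r/q$. Since $0\le r/q<1$, the ceiling of the latter is $s(q-1)+r$.

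Combining the steps gives $k=n-m\lceil(\delta-1)(1-1/q)\rceil$.

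The only delicate point is checking that the range condition covers all $i\le\delta-1$. The condition $\delta\le n$ only ensures that the code is well defined. Apart from that, everything rests on Lemma \ref{L-2-size}.
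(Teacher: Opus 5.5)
Your proposal is correct and is essentially the paper's route: the paper gives no separate proof and presents the lemma (from \cite{AA2007}) as a direct consequence of Lemma~\ref{L-2-size}. That consequence is exactly what you derive, namely that the defining set $\bigcup_{i=1}^{\delta-1}C_i$ is the disjoint union of the $m$-element cosets $C_i$ with $q\nmid i$, so $|T|=m\bigl((\delta-1)-\lfloor(\delta-1)/q\rfloor\bigr)=m\lceil(\delta-1)(1-1/q)\rceil$; your treatment of the cosets with $q\mid i$ and of the floor/ceiling identity supplies the routine details correctly.
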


\subsection{Locator polynomials of vectors in $\F_q^n$}

Locator polynomials are a fundamental tool in the study of the minimum distance of BCH codes \cite{ACS92,MScode}. To simplify the differential analysis in our constructive proof, we adopt the following definition, which differs slightly from most references.

\begin{definition}
    Let $\bc=(c_0,c_1,\dots,c_{n-1})\in \F_q^n$ be a vector of Hamming weight $w$ with nonzero components at positions $i_1,i_2,\dots,i_w$.
    Let $m=\mathrm{ord}_n(q)$ and let $\beta$ be a primitive $n$-th root of unity in $\mathbb{F}_{q^m}$.
    Then
    $$x_1=\beta^{i_1}, \dots, x_w=\beta^{i_w},$$
    are called the locators of $\bc$.
    The locator polynomial of $\bc$ is defined by
    $$H(x):=\prod_{j=1}^w(x-x_j)=\prod_{j=1}^w(x-\beta^{i_j}).$$

\end{definition}

In some references, e.g.  \cite{AS1994}, the term locator polynomial refers to the reciprocal polynomial $x^{\deg(H(x))}H(x^{-1})$, whose roots are $1/\beta^{i_j}$.

\section{Auxiliary results}\label{s3}

In this section, we introduce several theoretical tools used to determine the minimum distance of certain BCH codes. We first derive an explicit inverse of a modified Vandermonde matrix. This formula leads to a sufficient condition for the BCH code $\bC_{(q,n,\delta,1)}$ to have minimum distance $d=\delta$.

\begin{lemma}\label{L-2.1}\cite{Rawashdeh2019}
    Let $\bM$ be an $r \times r$ Vandermonde matrix defined as
    $$
    \bM = \begin{pmatrix}
    1 & 1 & \cdots & 1 \\
    x_1 & x_2 & \cdots & x_r \\
    \vdots & \vdots & \ddots & \vdots \\
    x_1^{r-1} & x_2^{r-1} & \cdots & x_r^{r-1}
    \end{pmatrix},
    $$
    where $x_1, \ldots, x_r \in \mathbb{F}_{q^m}$ are pairwise distinct nonzero elements. Then $\bM$ is invertible, and the $(i,j)$-th entry of its inverse is given by
    \begin{equation*}
        (\bM^{-1})_{ij} = (-1)^{r-j} \frac{U_{r-j, i}}{\prod_{k \neq i} (x_i - x_k)}, \quad 1 \le i,j \le r,
    \end{equation*}
    where $U_{k, i}$ denotes the $k$-th elementary symmetric polynomial in the variables $\{x_1, \dots, x_r\} \setminus \{x_i\}$. Specifically, $U_{0,i} = 1$, and for $k > 0$,
    $$
    U_{k, i} = \sum_{\substack{1 \leq j_1 < \dots < j_k \leq r \\ j_1, \dots, j_k \neq i}} x_{j_1} x_{j_2} \cdots x_{j_k}.
    $$
\end{lemma}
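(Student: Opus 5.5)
Invertibility is immediate: the determinant of $\bM$ is the Vandermonde determinant $\prod_{1\le k<l\le r}(x_l-x_k)$, which is nonzero because the $x_i$ are pairwise distinct. Nonzeroness of the $x_i$ is not needed for this.

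For the inverse I will use Lagrange interpolation rather than cofactors. For each $i$, set
$$L_i(x)=\prod_{k\ne i}\frac{x-x_k}{x_i-x_k}.$$
Expanding the numerator by Vieta's formulas gives
$$\prod_{k\ne i}(x-x_k)=\sum_{s=0}^{r-1}(-1)^{s}U_{s,i}\,x^{r-1-s}.$$
Reindexing with $j=r-s$, so $s=r-j$ and the exponent is $j-1$, this becomes
$$\prod_{k\ne i}(x-x_k)=\sum_{j=1}^{r}(-1)^{r-j}U_{r-j,i}\,x^{j-1}.$$
Hence the coefficient of $x^{j-1}$ in $L_i(x)$ is exactly the proposed entry
$$N_{ij}:=(-1)^{r-j}\frac{U_{r-j,i}}{\prod_{k\ne i}(x_i-x_k)}.$$

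Next, let $N=(N_{ij})$ and compute the product $N\bM$. Column $l$ of $\bM$ is $(1,x_l,\dots,x_l^{r-1})^{T}$. Therefore
$$(N\bM)_{il}=\sum_{j=1}^r N_{ij}x_l^{j-1}=L_i(x_l).$$
By construction $L_i(x_l)=\delta_{il}$, since the factor $(x-x_l)$ vanishes at $x_l$ when $l\ne i$, and $L_i(x_i)=1$. So $N\bM=I$. Because $\bM$ is square and invertible, it follows that $N=\bM^{-1}$, which gives the stated formula.

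The only point requiring care is the index bookkeeping in the Vieta expansion: one must check that the sign $(-1)^{r-j}$ and the subscript $r-j$ of $U_{r-j,i}$ correctly match the coefficient of $x^{j-1}$. The reindexing displayed above handles this.
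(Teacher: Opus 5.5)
Your proof is correct. The Vieta expansion and the reindexing $j=r-s$ give exactly the stated entries, and $(N\bM)_{il}=L_i(x_l)$ equals $1$ if $l=i$ and $0$ otherwise, so squareness of $\bM$ yields $N=\bM^{-1}$. You are also right that only distinctness of the $x_i$ is needed here.

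The paper does not prove this lemma; it simply quotes it from \cite{Rawashdeh2019}. So your Lagrange-interpolation argument is a short self-contained substitute rather than a competing route.

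It also gives you something the paper later states without proof. Row $i$ of $\bM^{-1}$ is the coefficient vector of $L_i$, so $\sum_{j}(\bM^{-1})_{ij}z^{j-1}=L_i(z)$. Setting $z=1$ gives $\sum_j(\bM^{-1})_{ij}=\prod_{k\ne i}(1-x_k)/\prod_{k\ne i}(x_i-x_k)$. Since $\bV^{-1}=\bD^{-1}\bM^{-1}$, dividing by $x_i$ yields exactly the row-sum formula of Lemma~\ref{L-2.2}; this is where nonzeroness of the $x_i$ is used. That row-sum identity is the one on which Theorem~\ref{T-3.1} rests.
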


Building on this result, we now consider the following modified Vandermonde matrix.

\begin{lemma}\label{L-2.2}
    Let $\bM=(x_{j}^{i-1})_{1 \le i, j \le r}$ be the Vandermonde matrix defined in Lemma \ref{L-2.1}, and let $\bD = \diag(x_1, x_2, \dots, x_r)$ be a diagonal matrix. Define the modified Vandermonde matrix $\bV$ as $\bV := \bM \cdot \bD$, that is,
    $$
    \bV = \begin{pmatrix}
    x_1 & x_2 & \cdots & x_r \\
    x_1^2 & x_2^2 & \cdots & x_r^2 \\
    \vdots & \vdots & \ddots & \vdots \\
    x_1^r & x_2^r & \cdots & x_r^r
    \end{pmatrix}.
    $$
    Then $\bV$ is invertible, and the $(i,j)$-th entry of $\bV^{-1}$ is given by
    \begin{equation*}
        (\bV^{-1})_{ij} = x_i^{-1} (\bM^{-1})_{ij} = (-1)^{r-j} \frac{x_i^{-1} U_{r-j, i}}{\prod_{{k=1 , k \neq i}}^{r} (x_i - x_k)}, \quad 1 \le i,j \le r,
    \end{equation*}
    where $U_{r-j, i}$ was defined in Lemma \ref{L-2.1}.
    Furthermore, the sum of the entries in the $i$-th row of $\bV^{-1}$ satisfies
$$
\sum_{j=1}^{r} (\bV^{-1})_{ij} = \frac{\prod_{{k=1, k \neq i}}^{r} (1 - x_k)}{x_i \prod_{{k=1 , k \neq i}}^{r} (x_i - x_k)}.
$$
\end{lemma}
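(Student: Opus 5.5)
The plan is to reduce everything to Lemma \ref{L-2.1} through the factorization $\bV=\bM\bD$. Since $x_1,\dots,x_r$ are pairwise distinct, $\bM$ is invertible by Lemma \ref{L-2.1}. Since the $x_i$ are nonzero, $\bD$ is invertible with $\bD^{-1}=\diag(x_1^{-1},\dots,x_r^{-1})$. Hence $\bV$ is invertible and
$$\bV^{-1}=\bD^{-1}\bM^{-1}.$$
Left multiplication by the diagonal matrix $\bD^{-1}$ scales the $i$-th row by $x_i^{-1}$. This gives $(\bV^{-1})_{ij}=x_i^{-1}(\bM^{-1})_{ij}$, and substituting the formula from Lemma \ref{L-2.1} yields the stated expression for the entries.

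For the row sum, I would factor out the common part $x_i^{-1}/\prod_{k\ne i}(x_i-x_k)$ and show that
$$\sum_{j=1}^r(-1)^{r-j}U_{r-j,i}=\prod_{k\ne i}(1-x_k).$$
Reindex with $s=r-j$, so that $s$ runs over $0,\dots,r-1$. The left side becomes $\sum_{s=0}^{r-1}(-1)^sU_{s,i}$. Now expand the product $\prod_{k\ne i}(1-x_k)$, which has $r-1$ factors. Collecting the terms of degree $s$ gives
$$\prod_{k\ne i}(1-x_k)=\sum_{s=0}^{r-1}(-1)^s e_s(\{x_k\}_{k\ne i}),$$
where $e_s$ is the $s$-th elementary symmetric polynomial in the variables other than $x_i$. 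By definition $e_s(\{x_k\}_{k\ne i})=U_{s,i}$, so the two sides agree.

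There is no real obstacle here. The only care needed is in the sign and index bookkeeping: check that $(-1)^{r-j}$ with $s=r-j$ matches the sign $(-1)^s$ coming from expanding $\prod(1-x_k)$, and that the range $j=1,\dots,r$ corresponds exactly to $s=0,\dots,r-1$, the full set of degrees of elementary symmetric polynomials in $r-1$ variables.
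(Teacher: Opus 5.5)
Your proof is correct and matches the argument the paper relies on. The paper does not write out a proof of this lemma, but its displayed identity $(\bV^{-1})_{ij}=x_i^{-1}(\bM^{-1})_{ij}$ is exactly your step $\bV^{-1}=\bD^{-1}\bM^{-1}$ combined with Lemma~\ref{L-2.1}, and your reindexing $s=r-j$ that identifies $\sum_{s=0}^{r-1}(-1)^sU_{s,i}$ with the expansion of $\prod_{k\neq i}(1-x_k)$ correctly supplies the row-sum identity.
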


\begin{example}
    Let $r=4$. The inverse $\bV^{-1}$ of the modified Vandermonde matrix $\bV$ in Lemma  \ref{L-2.2} is  given by
    $$\begin{pmatrix}
    	-\frac{x_2x_3x_4}{x_1(x_1-x_2)(x_1-x_3)(x_1-x_4)} & \frac{x_2x_3+x_2x_4+x_3x_4}{x_1(x_1-x_2)(x_1-x_3)(x_1-x_4)} & -\frac{x_2+x_3+x_4}{x_1(x_1-x_2)(x_1-x_3)(x_1-x_4)} & \frac{1}{x_1(x_1-x_2)(x_1-x_3)(x_1-x_4)} \\
    	-\frac{x_1x_3x_4}{x_2(x_2-x_1)(x_2-x_3)(x_2-x_4)} & \frac{x_1x_3+x_1x_4+x_3x_4}{x_2(x_2-x_1)(x_2-x_3)(x_2-x_4)} & -\frac{x_1+x_3+x_4}{x_2(x_2-x_1)(x_2-x_3)(x_2-x_4)} & \frac{1}{x_2(x_2-x_1)(x_2-x_3)(x_2-x_4)} \\
    	-\frac{x_1x_2x_4}{x_3(x_3-x_1)(x_3-x_2)(x_3-x_4)} & \frac{x_1x_2+x_1x_4+x_2x_4}{x_3(x_3-x_1)(x_3-x_2)(x_3-x_4)} & -\frac{x_1+x_2+x_4}{x_3(x_3-x_1)(x_3-x_2)(x_3-x_4)} & \frac{1}{x_3(x_3-x_1)(x_3-x_2)(x_3-x_4)} \\
    	-\frac{x_1x_2x_3}{x_4(x_4-x_1)(x_4-x_2)(x_4-x_3)} & \frac{x_1x_2+x_1x_3+x_2x_3}{x_4(x_4-x_1)(x_4-x_2)(x_4-x_3)} & -\frac{x_1+x_2+x_3}{x_4(x_4-x_1)(x_4-x_2)(x_4-x_3)} & \frac{1}{x_4(x_4-x_1)(x_4-x_2)(x_4-x_3)}
    \end{pmatrix}.\\
    $$
\end{example}

Lemma \ref{L-2.2} allows us to express the coefficients of a codeword of weight $\delta$ as rational functions of the locators. This leads to a necessary and sufficient condition for a BCH code $\bC_{(q,n,\delta,1)}$ to have minimum distance $d=\delta$.

\begin{theorem}\label{T-3.1}
	Let $m=\ord_n(q)$ and let $\beta$ be a primitive $n$-th root of unity in $\mathbb{F}_{q^m}$.
	The BCH code $\bC_{(q,n,\delta,1)}$ has true minimum distance $d=\delta$ if and only if there exist $\delta$ pairwise distinct elements
	$x_j=\beta^{i_j} \in \F_{q^m}^*$
	with $ 1 \le j \le \delta$ and $0 \le i_j \le n-1$
	such that
	\begin{equation}\label{e-0}
		S_j := \frac{\prod_{k=1, k \neq j}^{\delta-1} (1 - x_k)}{x_j \prod_{k=1, k \neq j}^{\delta-1} (x_j - x_k)} \in \F_q^*, \quad 1 \le j \le \delta-1.
	\end{equation}
	Furthermore, these elements $x_j$ are exactly the locators of a minimum-weight codeword.

\end{theorem}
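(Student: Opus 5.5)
The plan is to reduce the statement to solving a linear system whose coefficient matrix is the modified Vandermonde matrix of Lemma \ref{L-2.2}, with one locator normalized to $1$. This normalization is what produces the factors $(1-x_k)$ in \eqref{e-0}, and I read the statement with $x_\delta=1$. Since $1,2,\dots,\delta-1$ lie in the defining set $T$, the BCH bound gives $d\ge \delta$. Hence $d=\delta$ holds if and only if $\bC_{(q,n,\delta,1)}$ contains a codeword of weight exactly $\delta$, and the whole proof is to characterize such codewords.

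First I would record the membership criterion. For $\bc\in\F_q^n$ with $c(x)=\sum c_ix^i$, we have $\bc\in\bC_{(q,n,\delta,1)}$ if and only if $c(\beta^i)=0$ for $1\le i\le \delta-1$. The forward direction holds because $g(x)$ divides $c(x)$. For the converse, the coefficients of $c$ lie in $\F_q$, so $c(\beta^{iq^l})=c(\beta^i)^{q^l}=0$; thus every root of each $\M_i(x)$ is a root of $c$, and $g\mid c$.

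Next, let $\bc$ have weight $\delta$, with support $i_1,\dots,i_\delta$ and nonzero values $a_1,\dots,a_\delta$. Cyclic shifts and scalar multiples preserve both the code and the weight. So I may assume $i_\delta=0$, i.e.\ $x_\delta=1$, and $a_\delta=-1$. The membership criterion then reads $\sum_{j=1}^{\delta-1}a_jx_j^i=1$ for $i=1,\dots,\delta-1$. With $r=\delta-1$ this is $\bV\mathbf{a}=\bone$, where $\bV$ is the modified Vandermonde matrix of Lemma \ref{L-2.2} in the locators $x_1,\dots,x_{\delta-1}$. These locators are pairwise distinct, nonzero and different from $1$, so $\bV$ is invertible. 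Therefore $\mathbf{a}=\bV^{-1}\bone$, and $a_j$ is the $j$-th row sum of $\bV^{-1}$, which by Lemma \ref{L-2.2} is exactly $S_j$.

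It remains to run the argument in both directions. If a weight-$\delta$ codeword exists, the normalized one has locators $x_1,\dots,x_{\delta-1},x_\delta=1$ with $S_j=a_j\in\F_q^*$. Conversely, suppose distinct $x_j=\beta^{i_j}$ are given with $x_\delta=1$ and all $S_j\in\F_q^*$. Define $\bc$ by placing $S_j$ at position $i_j$ for $j\le\delta-1$ and $-1$ at position $0$. Then $\bc\in\F_q^n$ has weight exactly $\delta$, because each $S_j\neq 0$ and the positions are distinct. Reversing the computation above, $\bc$ satisfies the $\delta-1$ equations, so it is a codeword. This proves $d=\delta$, and by construction the $x_j$ are the locators of this minimum-weight codeword, which gives the final sentence of the statement.

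The computation itself is routine; the main obstacle is the bookkeeping around the normalization. One must check that the cyclic shift keeps $\bc$ inside the code and that the locators stay distinct, nonzero and $\neq1$. One must also state explicitly the implicit hypothesis $x_\delta=1$ (equivalently $0\in\{i_j\}$), since $x_\delta$ does not otherwise appear in \eqref{e-0}. A minor point is that each $S_j$ is automatically nonzero when the $x_k$ are distinct and $\neq1$, so the real content of \eqref{e-0} is that $S_j\in\F_q$.
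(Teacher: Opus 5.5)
Your proposal is correct and is essentially the paper's proof. Both shift a weight-$\delta$ codeword so that $x_\delta=1$, write the checks $c(\beta^s)=0$ for $1\le s\le\delta-1$ as $\bV\mathbf{a}=-c_{i_\delta}\bone$ with $\bV$ the modified Vandermonde matrix of $x_1,\dots,x_{\delta-1}$, and read off the coefficients as the row sums $S_j$ of $\bV^{-1}$ from Lemma \ref{L-2.2}, in both directions. Your extra care is worthwhile, especially making $x_\delta=1$ an explicit hypothesis where the paper only says ``without loss of generality $i_\delta=0$'': read literally without it, the statement fails, e.g.\ for $q=2$, $\delta=2$ one may take $x_1=1$, $x_2=\beta$, although $\bC_{(2,n,2,1)}$ has $d\ge 3$.
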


\begin{proof}
\rm{
	We first prove the sufficiency.
	Suppose that such $\delta$ pairwise distinct elements exist.  
    Assume $i_\delta=0$ without loss of generality, which implies $x_\delta = \beta^{i_\delta} = 1$.
    To prove $d=\delta$, it suffices to construct a valid codeword $c(x)$ of weight $\delta$ with support $\{i_1, \dots, i_{\delta-1}, 0\}$.

    By the definition of BCH codes, a polynomial
    $c(x)= \sum_{j=1}^{\delta } c_{i_j} x^{i_j}$ with coefficients $c_{i_j} \in \F_q^*$ belongs to $\bC_{(q,n,\delta,1)}$ if and only if it satisfies the parity check equations,
    \begin{equation}\label{e-3.1-0}
    	c(\beta^s) = 0, \quad s=1,\dots,\delta-1.
    \end{equation} 
    Let $\{x_1, \dots, x_{\delta-1}, x_\delta \}$ be the locators of $c(x)$,  then \eqref{e-3.1-0} is equivalent to
    \begin{equation}\label{e-3.1-1}
    	\sum_{j=1}^{\delta} c_{i_j} (\beta^{s})^{i_j} = \sum_{j=1}^{\delta} c_{i_j} x_j^s = 0, \quad s=1, \dots, \delta-1.
    \end{equation}
   Then, \eqref{e-3.1-1} can be written as
    \begin{equation}\label{e-3.1-2}
    	\begin{pmatrix}
    		x_1 & x_2 & \cdots & x_\delta \\
    		x_1^2 & x_2^2 & \cdots & x_\delta^2 \\
    		\vdots & \vdots & \ddots & \vdots \\
    		x_1^{\delta-1} & x_2^{\delta-1} & \cdots & x_\delta^{\delta-1}
    	\end{pmatrix}
    	\begin{pmatrix}
    		c_{i_1} \\ c_{i_2} \\ \vdots \\ c_{i_\delta}
    	\end{pmatrix} = \mathbf{0}.
    \end{equation} 
    Let $\bV$ denote the $(\delta-1) \times (\delta-1)$ modified Vandermonde matrix of $x_1, \dots, x_{\delta-1}$ and let $(1, 1, \dots, 1)^\top = \mathbf{1}$. Then \eqref{e-3.1-2} becomes
    $$
    \bV \cdot \begin{pmatrix} c_{i_1} \\ \vdots \\ c_{i_{\delta-1}} \end{pmatrix} + \mathbf{1} \cdot c_{i_\delta} = \mathbf{0}.
    $$
    Given that $x_1, \dots, x_{\delta-1}$ are distinct and nonzero, $\bV$ is invertible. Thus,
    \begin{equation}\label{e-3.1-3}
    	\begin{pmatrix} c_{i_1} \\ \vdots \\ c_{i_{\delta-1}} \end{pmatrix} = - \bV^{-1} \cdot \mathbf{1} \cdot c_{i_\delta}.
    \end{equation} 
    By Lemma \ref{L-2.2}, the entries of $\bV^{-1} \cdot \bone$ are precisely $S_j$.
    Hence,
    \begin{equation}\label{e-3.1-4}
    	c_{i_j}=-c_{i_\delta} \cdot S_j, \quad 1 \le j \le \delta-1.
    \end{equation} 
    Since $S_j \in \F_q^*$ by assumption, choosing any $c_{i_\delta} \in \F_q^*$ ensures that all coefficients $c_{i_j}$ also belong to $\F_q^*$. Therefore $c(x)= \sum_{j=1}^{\delta } c_{i_j} x^{i_j}$ is a valid codeword in $\bC_{(q,n,\delta,1)}$. The minimum distance of $\bC_{(q,n,\delta,1)}$ is $\delta$.

    Conversely, suppose that $\bC_{(q,n,\delta,1)}$ has minimum distance $d=\delta$.
    This implies the existence of at least one codeword $c(x)$ with weight $\delta$.
    Let its support be $\{i_1, \dots, i_\delta\}$ and let $x_j = \beta^{i_j} \in \F_{q^m}$ for $1 \le j \le \delta$.
    Without loss of generality, we may shift the codeword so that $i_\delta = 0$ and thus $x_\delta = 1$.

    As $c(x)$ is a codeword, its coefficients $c_{i_j}$ must be non-zero elements in $\F_q$ and satisfy the parity-check system \eqref{e-3.1-3}.
    By \eqref{e-3.1-4},
    $$
    S_j=-c_{i_j} / c_{i_\delta}, \quad 1 \le j \le \delta-1,
    $$
    where $S_j$ is defined as \eqref{e-0}.
    Since $c_{i_j},c_{i_\delta} \in  \F_q^*$, it follows that $S_j \in \F_q^*$ for all $1 \le j \le \delta-1$.

    This completes the proof.
    \qed
}
\end{proof}

\begin{remark}
\rm{
	The algebraic approach in Theorem \ref{T-3.1} can be extended to weights $w>\delta$ through a Vandermonde system. However, in this case the resulting equations involve additional unknown coefficients, and one can no longer derive a closed-form condition depending solely on the locators. Hence, the characterization in terms of locators is restricted to the case $w=\delta$.
}
\end{remark}

Theorem \ref{T-3.1} shows that if there exists a set of $\delta$ distinct elements in $\F_{q^m}^*$ satisfying \eqref{e-0}, then the corresponding BCH code must contain a codeword of weight $\delta$. In particular, Condition \eqref{e-0} depends only on algebraic relations among these elements and remains valid under any field extension that contains them.  Specifically, we have the following useful result. 

\begin{theorem}\label{T-3.2}
    Let $\bC_{(q,n,\delta,1)}$ be a narrow-sense primitive BCH code of length $n = q^h - 1$ with minimum distance $d = \delta$. If $n' = q^m - 1$ where $m \equiv 0 \pmod{h}$, then the narrow-sense primitive BCH code $\bC'_{(q,n',\delta,1)}$ also has minimum distance $d' = \delta$.
\end{theorem}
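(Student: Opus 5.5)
Since $h \mid m$, the field $\F_{q^h}$ is a subfield of $\F_{q^m}$, and the whole argument runs through Theorem \ref{T-3.1} applied twice: once to $\bC_{(q,n,\delta,1)}$ over $\F_{q^h}$ and once to $\bC'_{(q,n',\delta,1)}$ over $\F_{q^m}$.

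First I would apply the "only if" direction of Theorem \ref{T-3.1} to $\bC_{(q,n,\delta,1)}$. Here $\ord_n(q)=h$ because $n=q^h-1$, so we may take $\beta=\alpha_h$, a primitive element of $\F_{q^h}$. Since $d=\delta$, the theorem gives $\delta$ pairwise distinct elements $x_1,\dots,x_\delta\in\F_{q^h}^*$ with $x_\delta=1$ after the shift, and the quantities $S_j$ of \eqref{e-0} lie in $\F_q^*$ for $1\le j\le \delta-1$.

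Next I would transport these elements into the larger code. We have $\ord_{n'}(q)=m$, and we fix a primitive element $\alpha$ of $\F_{q^m}$ as the primitive $n'$-th root of unity $\beta'$. Because $(q^h-1)\mid(q^m-1)$, the group $\F_{q^h}^*$ is the unique subgroup of order $q^h-1$ in the cyclic group $\F_{q^m}^*=\langle\alpha\rangle$. So each $x_j$ can be written as $x_j=\alpha^{i'_j}$ for some $0\le i'_j\le n'-1$; explicitly, $i'_j$ is a multiple of $(q^m-1)/(q^h-1)$. The $x_j$ stay pairwise distinct and nonzero in $\F_{q^m}$, and $x_\delta=1=\alpha^0$.

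Then I would observe that each $S_j$ is a rational expression in $x_1,\dots,x_{\delta-1}$ whose denominators $x_j\prod_{k\neq j}(x_j-x_k)$ are nonzero. Its value is computed in the subfield $\F_{q^h}$, so it is the same element when the $x_j$ are regarded in $\F_{q^m}$. Hence $S_j\in\F_q^*$ still holds, and the "if" direction of Theorem \ref{T-3.1}, applied to $\bC'_{(q,n',\delta,1)}$ with $\beta'=\alpha$, produces a codeword of weight $\delta$. The BCH bound gives $d'\ge\delta$, so $d'=\delta$.

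I do not expect a real obstacle. The only point needing care is that Theorem \ref{T-3.1} requires the locators to be powers of the chosen primitive $n'$-th root of unity, which is exactly the subgroup inclusion above. The other point is the independence of the parameters from the choice of $\alpha$: for the smaller code we may choose $\alpha_h=\alpha^{(q^m-1)/(q^h-1)}$ so that everything is compatible, although the direct subgroup argument already suffices.
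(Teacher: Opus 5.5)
Your proposal is correct and follows the paper's proof. Theorem~\ref{T-3.1} supplies locators $x_1,\dots,x_{\delta-1}\in\F_{q^h}^*\setminus\{1\}$ with every $S_j\in\F_q^*$; since $\F_{q^h}\subseteq\F_{q^m}$, these $S_j$ are unchanged, and a second application of Theorem~\ref{T-3.1} gives a weight-$\delta$ codeword in $\bC'_{(q,n',\delta,1)}$.

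Your extra remarks make explicit details the paper leaves implicit: every element of $\F_{q^h}^*$ is a power of the chosen primitive element $\alpha$ of $\F_{q^m}$, the normalization $x_\delta=1$, and the BCH bound giving $d'\ge\delta$.
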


\begin{proof}
\rm{
	Assume that $\bC_{(q,n,\delta,1)}$ has minimum distance  $d=\delta$. By Theorem \ref{T-3.1}, there exist $\delta-1$ distinct nonzero elements $x_{1}, \ldots, x_{\delta-1} \in \F_{q^h}^* \setminus \{1\}$ such that
	
	\begin{equation}\label{e-3.2-0}
		S_j:=\frac{\prod_{k=1, k \neq j}^{\delta-1} (1 - x_k)}{x_j\prod_{k=1, k \neq j}^{\delta-1} (x_j - x_k)} \in \F_{q}^*, \quad 1 \le j \le \delta-1.
	\end{equation}
	
	For any $m \equiv 0 \pmod{h}$, we have $\F_{q^h} \subseteq \F_{q^m}$.
	Thus, $x_{1}, \ldots, x_{\delta-1}$ also belong to $\F_{q^m}^* \setminus \{1\}$ and satisfy \eqref{e-3.2-0}. By Theorem \ref{T-3.1}, this implies the existence of a codeword of weight $\delta$ in $\bC'_{(q,n',\delta,1)}$. The minimum distance of $\bC'_{(q,n',\delta,1)}$ is $d'=\delta$.
	\qed
}
\end{proof}

Theorems \ref{T-3.1} and \ref{T-3.2} will be used later for determining the minimum distances of some families of BCH codes. 

\section{Primitive BCH codes with $d=\delta$}\label{s4}

In this section, we construct explicit infinite families of ternary, quaternary and $q$-ary primitive BCH codes with $d=\delta$.

\subsection{Ternary and quaternary BCH codes with $\delta \in \{5,6,7,8\}$}

\begin{theorem}\label{T-ternary}
	Let $m \ge 3$. The ternary narrow-sense primitive BCH code $\bC_{(3, 3^m-1, \delta, 1)}$ has the following parameters:
	\begin{itemize}
		\item For $\delta=5$, if $2 \mid m$ or $3 \mid m$, then the code has parameters $[3^m-1, 3^m-1-3m, 5]_3$. This code is almost distance-optimal.
		\item For $\delta=7$, if $3 \mid m$ or $4 \mid m$,  then the code has parameters $[3^m-1, 3^m-1-4m, 7]_3$.
		\item For $\delta=8$, if $3 \mid m$,  then the code has parameters $[3^m-1, 3^m-1-5m, 8]_3$.
	\end{itemize}
\end{theorem}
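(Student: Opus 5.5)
The proof has three parts: the dimension, the minimum distance, and the optimality claim for $\delta=5$.

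\emph{Dimension.} For $q=3$ and $m\ge 3$ we have $\lfloor n q^{\lceil m/2\rceil}/(q^m-1)\rfloor = 3^{\lceil m/2\rceil}\ge 9\ge\delta$ for $\delta\le 8$. Hence Lemma \ref{L-dim} applies and gives $k=n-m\lceil(\delta-1)\cdot 2/3\rceil$. This equals $n-3m$, $n-4m$ and $n-5m$ for $\delta=5,7,8$ respectively. The remaining task is to show $d=\delta$, since $d\ge\delta$ by the BCH bound.

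\emph{Minimum distance.} By Theorem \ref{T-3.2} it suffices to exhibit a weight-$\delta$ codeword in a base case for each listed divisor $h$ of $m$:
\begin{itemize}
\item $h=2,3$ for $\delta=5$;
\item $h=3,4$ for $\delta=7$;
\item $h=3$ for $\delta=8$.
\end{itemize}
Theorem \ref{T-3.2} as stated requires the base code to have $m=h$. For $h=2$ this is the code $\bC_{(3,8,5,1)}$, where $5\nmid 8$; still, any weight-$5$ codeword over $\F_9$ is fine, because the argument of Theorem \ref{T-3.2} only uses condition \eqref{e-0} with $x_j\in\F_{3^h}$.

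For each base case I would use Theorem \ref{T-3.1} and look for $\delta-1$ distinct elements $x_1,\dots,x_{\delta-1}\in\F_{3^h}^*\setminus\{1\}$ with every $S_j\in\F_3^*$. I would choose them as the roots of an explicit polynomial $H(x)$, so that $x_\delta=1$ is adjoined. Then $\prod_{k\ne j}(x_j-x_k)=H'(x_j)$ and $\prod_{k\neq j}(1-x_k)=H(1)/(1-x_j)$, which gives
\[
S_j=\frac{H(1)}{x_j(1-x_j)H'(x_j)}.
\]
The condition becomes: $x(1-x)H'(x)$ takes values in $H(1)\F_3^*$ on the roots of $H$. For example, if $H'(x)$ is essentially constant or a monomial, as happens for $H=x^{a}+\dots$ with $3\mid$ most exponents, the condition reduces to a simple one on $x(1-x)$. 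Such choices are suggested by affine subspaces or subfield cosets; e.g. take $H(x)(x-1)$ to be a linearized-type polynomial such as $x^{9}-x$ restricted suitably, or a polynomial like $(x^3-x-a)$ times something.

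\emph{Fallback for the search.} Failing a slick choice, I would simply do an explicit (computer-assisted) search in $\F_9$, $\F_{27}$ and $\F_{81}$ and present the found locator sets with a primitive element $\alpha$ and its minimal polynomial. The proof is then finished by verifying the $S_j$ values by direct computation. This base-case construction, especially for $\delta=8$ over $\F_{27}$ and $\delta=7$ over $\F_{81}$, is the main obstacle; I expect explicit tables to be the honest route.

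\emph{Almost distance-optimality for $\delta=5$.} I would invoke the sphere-packing bound. A ternary $[3^m-1,3^m-1-3m]$ code with $d\ge 7$ would need
\[
\sum_{i=0}^{3}\binom{n}{i}2^i\le 3^{3m}.
\]
The term $\binom{n}{3}8\approx \tfrac{4}{3}n^3$ already exceeds $3^{3m}\approx n^3$, so $d\le 6$ for any such code. Existence of a $[n,k,6]_3$ code is needed for "almost" rather than outright optimal. I would obtain it, e.g., by citing the known optimal-code tables or by the existence of a $d=6$ code from extending/puncturing arguments. If that existence is unclear, I would state the result as: $d=5$ is within one of optimal.
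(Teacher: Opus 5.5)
Your proposal follows the paper's proof essentially step for step. You get the dimensions from Lemma \ref{L-dim}, and you lift $d=\delta$ via Theorem \ref{T-3.2} from base cases settled by computation: $\bC_{(3,8,5,1)}$, $\bC_{(3,26,5,1)}$, $\bC_{(3,26,7,1)}$, $\bC_{(3,80,7,1)}$ and $\bC_{(3,26,8,1)}$, which the paper likewise only confirms by SageMath/Magma. Your sphere-packing argument against $d\ge 7$ also matches; your single-term estimate $8\binom{n}{3}>(n+1)^3$ does hold for every $n\ge 26$. Your caveat that strict almost-optimality needs an $[n,k,6]_3$ code is fair: the paper also only claims it ``with respect to the sphere packing bound'' and exhibits no such code.
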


\begin{proof}
	\rm{
		The dimensions follow directly from Lemma \ref{L-dim}.
		Theorem \ref{T-3.2} shows that if $\mathbf{C}_{(3,3^h-1,\delta,1)}$ has minimum distance $d=\delta$, then $\mathbf{C}_{(3,3^m-1,\delta,1)}$ also  has $d=\delta$ for any $m$ divisible by $h$.
		SageMath and Magma computations have confirmed the minimum distance in the required base cases.
		For $\delta=5$, $\bC_{(3,8,5,1)}$ with $h=2$ and $\bC_{(3,26,5,1)}$ with $h=3$ have $d=5$.
		For $\delta=7$, $\bC_{(3,26,7,1)}$ with $h=3$ and $\bC_{(3,80,7,1)}$ with $h=4$ have $d=7$.
		For $\delta=8$, $\bC_{(3,26,8,1)}$ with $h=3$ has $d=8$.
		It remains to show that the code $[3^m-1, 3^m-1-3m, 5]_3$ is almost distance-optimal for $m \ge 3$.
		
		For $n=3^m-1$, the sphere packing bound gives the following restriction on the parameters of $\bC_{(3, 3^m-1, 5, 1)}$,
		\begin{equation}\label{e-4.1-0}
			\sum_{i=0}^{\left\lfloor \frac{d-1}{2} \right\rfloor} 2^i \binom{n}{i} \leq 3^{3m}.
		\end{equation}
		Suppose $m \ge 3$ and $d \ge 7$, then \eqref{e-4.1-0} leads to
%		\begin{equation}\label{e-4.1-1}
			$$1+2n+\frac{4n(n-1)}{2}+\frac{8n(n-1)(n-2)}{6} \le (n+1)^3.$$
%		\end{equation} 		
		After a simplification, we have
		\begin{equation}\label{e-4.1-2}
			n^2-15n-1 \le 0.
		\end{equation}
		When $n=3^m-1\ge26$, \eqref{e-4.1-2} does not hold. Therefore, there is no $[3^m-1, 3^m-1-3m, 7]_3$ code. Then any linear code with parameters $[3^m-1, 3^m-1-3m, 5]_3$ is almost distance-optimal with respect to the sphere packing bound.
		\qed
		
	}
\end{proof}

\begin{example}
	\rm{
		For $m=3$ and $\delta=5$, the code  $\bC_{(3,26,5,1)}$ has parameters $[26,17,5]_3$ and is almost distance-optimal according to \cite{codetable}.
		More codes in Theorem \ref{T-ternary} are listed in Table \ref{tab-4.1}.
		We compare the minimum distance with the best known value $d_{best}$ in \cite{codetable} for codes with the same parameters. The symbol ``/'' indicates that the best known distance is not available in \cite{codetable}. This notation applies to all subsequent tables.
		All numerical examples were produced with SageMath and Magma.
	}
\end{example}

\begin{table}[htbp]
	\centering
	\caption{\label{tab-4.1} Examples of the code in Theorem \ref{T-ternary}}
	\begin{tabular}{|l|l|l|l|}
		\hline
		$\delta$& $m$ & $\mathbf{C}_{(3,3^m-1,\delta,1)}$ & $d_{best}$ \\  \hline
		\multirow{3}{*}{5}
		& 3 & $[26,17,5]_3$     &  6\\ \cline{2-4}
		& 4 & $[80,68,5]_3$     &  6 \\ \cline{2-4}
		& 6 & $[728,710,5]_3$   &  / \\ \hline
		
		\multirow{3}{*}{7}
		& 3 & $[26,14,7]_3$   &  7 \\ \cline{2-4}
		& 4 & $[80,64,7]_3$   &  8 \\ \cline{2-4}
		& 6 & $[728,704,7]_3$   &  / \\ \hline
		
		\multirow{2}{*}{8}
		& 3 & $[26,11,8]_3$   &  9 \\ \cline{2-4}
		& 6 & $[728,698,8]_3$   &  / \\ \hline
	\end{tabular}
\end{table}

\begin{theorem}\label{T-quaternary}
	Let $m \ge 2$ be an integer such that $2 \mid m$ or $3 \mid m$. The quaternary narrow-sense primitive BCH code $\bC_{(4, 4^m-1, \delta, 1)}$ has the following parameters:
	\begin{itemize}
		\item For $\delta=5$, the code has parameters $[4^m-1, 4^m-1-3m, 5]_4$. This code is almost distance-optimal.
		\item For $\delta=6$, the code has parameters $[4^m-1, 4^m-1-4m, 6]_4$. This code is near distance-optimal.
		\item For $\delta=7$, the code has parameters $[4^m-1, 4^m-1-5m, 7]_4$.
	\end{itemize}
\end{theorem}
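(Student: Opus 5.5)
The plan follows the proof of Theorem \ref{T-ternary}. There are three ingredients: the dimension formula, a lifting of the minimum distance from small base cases, and sphere packing arguments for the optimality claims.

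\emph{Dimension.} I would apply Lemma \ref{L-dim} with $q=4$. For $\delta\in\{5,6,7\}$ the hypothesis $\delta \le \lfloor n q^{\lceil m/2\rceil}/(q^m-1)\rfloor = 4^{\lceil m/2\rceil}$ holds for every $m\ge 2$. The formula gives
$$k=n-m\lceil (\delta-1)\cdot 3/4\rceil.$$
The ceiling equals $3$, $3$ and $5$ for $\delta=5,6,7$. This gives the claimed $3m$ and $5m$ for $\delta=5,7$, but for $\delta=6$ it gives $3m$, not the stated $4m$. Directly, $\{1,\dots,5\}$ meets the cosets $C_1,C_2,C_3,C_5$, and $C_2=C_1\cdot 2$ is not equal to $C_1$ for $q=4$. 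The coset leaders are $1,2,3,5$, so there are four cosets, each of size $m$, giving codimension $4m$. This is consistent with the statement, so I would compute the $\delta=6$ dimension by listing coset leaders via Lemma \ref{L-2-size} rather than trusting the closed formula.

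\emph{Minimum distance.} By the BCH bound, $d\ge\delta$. By Theorem \ref{T-3.2}, it suffices to show that $\bC_{(4,15,\delta,1)}$ (the case $h=2$) and $\bC_{(4,63,\delta,1)}$ (the case $h=3$) have $d=\delta$ for each $\delta\in\{5,6,7\}$. For length $15$ with $\delta=5$, $\delta$ divides $n$, so this case is already known from \cite{Betten}. The remaining base cases I would settle either by a SageMath/Magma computation, as in Theorem \ref{T-ternary}, or more constructively via Theorem \ref{T-3.1}. In the constructive version, one searches over $(\delta-1)$-subsets of $\F_{4^h}^*\setminus\{1\}$ for which every $S_j$ lies in $\F_4^*$; the check is simply $S_j^3=1$. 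Exhibiting one explicit locator set per case gives a certificate.

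\emph{Optimality.} For $\delta=5$, I would use the sphere packing bound with redundancy $3m$ and $n+1=4^m$. A code with $d\ge 7$ would require
$$1+3n+9\binom n2+27\binom n3\le (n+1)^3.$$
Expanding, the left side is about $4.5n^3$, which exceeds the right side for $n\ge 15$, so no $[n,k,7]_4$ code exists. The existence of an $[n,k,6]_4$ code needs justification. For $m=2$ and $m=3$ I would appeal to \cite{codetable}. In general, I would argue it by extending the code to a $[4^m, k, 6]$ code and then shortening or puncturing, or by citing a known family such as Goppa or Zetterberg-type codes. This part of the argument I am least certain of.

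For $\delta=6$, with redundancy $4m$, I would first rule out $d\ge 9$. Sphere packing with $t=4$ gives a left side of about $81n^4/24$, while $(n+1)^4$ is smaller, so $d\ge 9$ is impossible. To get \emph{near} optimality I also need $d=8$ to be attained, which again requires a code-table citation or a construction. The main obstacle is therefore the existence parts of the optimality claims. The sphere packing inequalities are routine, and the base minimum-distance cases are finite computations.
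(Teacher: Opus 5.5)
Your plan follows the paper's proof. Dimensions come from Lemma \ref{L-dim}. The equality $d=\delta$ is lifted by Theorem \ref{T-3.2} from the machine-checked base cases $\bC_{(4,15,\delta,1)}$ and $\bC_{(4,63,\delta,1)}$. The sphere packing bound then excludes $[4^m-1,4^m-1-3m,7]_4$ and $[4^m-1,4^m-1-4m,9]_4$ codes. Three points need correcting.

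First, the ``discrepancy'' at $\delta=6$ is an arithmetic slip. Since $\lceil 5\cdot\frac{3}{4}\rceil=\lceil 3.75\rceil=4$, Lemma \ref{L-dim} already gives codimension $4m$, in agreement with your coset count.

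Second, your claim that $\delta\le 4^{\lceil m/2\rceil}$ for all $m\ge 2$ is false at $m=2$, where $4^{\lceil m/2\rceil}=4<\delta$. It holds for $m\ge 3$. At $m=2$ neither Lemma \ref{L-dim} nor Lemma \ref{L-2-size} applies. Your ``each coset has size $m$'' also fails there, because $5\cdot 4\equiv 5 \pmod{15}$ and so $C_5=\{5\}$. A direct count modulo $15$ uses the cosets $C_1=\{1,4\}$, $C_2=\{2,8\}$, $C_3=\{3,12\}$, $C_5=\{5\}$ and $C_6=\{6,9\}$. It still gives $k=9$ for $\delta=5$. For $\delta=6,7$ it gives $k=8$ and $k=6$, not $7$ and $5$; Table \ref{tab-4.2} lists exactly $[15,8,6]_4$ and $[15,6,7]_4$. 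So for $\delta\in\{6,7\}$ the stated dimensions hold only for $m\ge 3$. The paper's one-line appeal to Lemma \ref{L-dim} has the same oversight, so this is a defect in the statement rather than something your method can repair.

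Third, the existence step you single out as the main obstacle is not proved in the paper at all. The paper establishes only the non-existence halves, which are exactly your sphere-packing computations. It uses ``almost/near distance-optimal'' in the looser sense that $d+2$ (resp.\ $d+3$) is impossible, concluding optimality from non-existence alone. You should adopt that reading and drop the search for $[n,k,6]_4$ and $[n,k,8]_4$ codes, which is not a viable way to finish:
\begin{itemize}
\item Table \ref{tab-4.2} records best known distances $5$ and $6$ for the listed parameters, so the code-table appeal you propose for $m=2,3$ would fail.
\item For the actual $m=2$ code $[15,8,6]_4$, the Griesmer bound $8+2+6\cdot 1=16>15$ rules out a $[15,8,8]_4$ code outright. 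The literal definition of near distance-optimality is therefore false there.
\end{itemize}
Finally, replace ``the left side is about $4.5n^3$'' with the exact inequality. It reduces to $7n^2-24n+9\le 0$, which fails for all $n\ge 15$.
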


\begin{proof}
	\rm{
		The dimensions follow directly from Lemma \ref{L-dim}.
		Theorem \ref{T-3.2} shows that if $\mathbf{C}_{(4,4^h-1,\delta,1)}$ has minimum distance $d=\delta$, then $\mathbf{C}_{(4,4^m-1,\delta,1)}$ also  has $d=\delta$ for any $m$ divisible by $h$.
		SageMath and Magma computations have confirmed the minimum distance in the required base cases.
		For $\delta=5$, $\bC_{(4,15,5,1)}$ with $h=2$ and $\bC_{(4,63,5,1)}$ with $h=3$ have $d=5$.
		For $\delta=6$, $\bC_{(4,15,6,1)}$ with $h=2$ and $\bC_{(4,63,6,1)}$ with $h=3$ have $d=6$.
		For $\delta=7$, $\bC_{(4,15,7,1)}$ with $h=2$ and $\bC_{(4,63,7,1)}$ with $h=3$ have $d=7$.
		It remains to show the almost distance optimality of the code.
		
		For $n=4^m-1$, the sphere packing bound gives the following restriction on the parameters of $\bC_{(4, 4^m-1, 5, 1)}$,
		\begin{equation}\label{e-4.2-0}
			\sum_{i=0}^{\left\lfloor \frac{d-1}{2} \right\rfloor} 3^i \binom{n}{i} \leq 4^{3m}.
		\end{equation}		
		Suppose $m \ge 2$ and $d \ge 7$, then \eqref{e-4.2-0} leads to
		$$1+3n+\frac{9n(n-1)}{2}+\frac{27n(n-1)(n-2)}{6} \le (n+1)^3.$$ 		
		After a simplification, we have
		\begin{equation}\label{e-4.2-2}
			7n^2-24n+9 \le 0.
		\end{equation}
		
		When $n=4^m-1\ge15$, \eqref{e-4.2-2} does not hold. Therefore, there is no $[4^m-1, 4^m-1-3m, 7]_4$ code, then any  $[4^m-1, 4^m-1-3m, 5]_4$ linear code is almost distance-optimal.
		Similarly, no $[4^m-1,4^m-1-4m,9]_4$ codes exist, then any $[4^m-1,4^m-1-4m,6]_4$ linear code is  near distance-optimal.
		\qed
		
	}
\end{proof}

\begin{example}
	\rm{
		For $m=3$ and $\delta=6$, the code  $\bC_{(4,63,6,1)}$ has parameters $[63,51,6]_4$, which are the best known parameters according to \cite{codetable}. More codes in Theorem \ref{T-quaternary} are listed in Table \ref{tab-4.2}.  All numerical examples were computed with both SageMath and Magma.
	}
\end{example}

\begin{longtable}{|l|l|l|}
	\caption{\label{tab-4.2} Examples of the code in Theorem \ref{T-quaternary} } \\ \hline
	$\delta$& $\bC_{(4,4^m-1,\delta,1)}$ & $d_{best}$ \\  \hline
	\multirow{3}{*}{5} & $[15,9,5]_4$     &  5\\ \cline{2-3}
	& $[63,54,5]_4$     &  5\\ \cline{2-3}
	& $[255,243,5]_4$   &  5 \\ \hline
	
	\multirow{3}{*}{6} & $[15,8,6]_4$   &  6 \\ \cline{2-3}
	& $[63,51,6]_4$   &  6 \\ \cline{2-3}
	& $[255,239,6]_4$   &  6 \\ \hline
	
	\multirow{3}{*}{7} & $[15,6,7]_4$   &  8 \\ \cline{2-3}
	& $[63,48,7]_4$   &  8 \\ \cline{2-3}
	& $[255,235,7]_4$   &  8 \\ \hline
	
\end{longtable}

\subsection{A family of $q$-ary BCH codes with variable $\delta$}

In this subsection, we construct $q$-ary BCH codes with $d=\delta$ for variable $\delta$. For any $2 \le \delta \le q-1$, we have the following result.

\begin{theorem}\label{T-delta<q}
    Let $q$ be a prime power and let $m \ge 1$ be a positive integer. For any $2 \le \delta \le q-1$, the narrow-sense BCH code  $\bC_{(q,q^m-1,\delta,1)}$ has parameters
    $$[q^m-1, q^m-1-m (\delta-1), \delta]_q.$$
\end{theorem}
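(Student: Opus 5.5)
The plan is to reduce the statement to the case $m=1$ via Theorem \ref{T-3.2} and to compute the dimension directly from the cyclotomic cosets.

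\emph{Minimum distance.} Take $h=1$ in Theorem \ref{T-3.2}. Then $n=q-1$ and $\ord_{q-1}(q)=1$, so every $q$-cyclotomic coset modulo $q-1$ is a singleton. Hence $\bC_{(q,q-1,\delta,1)}$ has generator polynomial
$$g(x)=\prod_{i=1}^{\delta-1}(x-\alpha^i),$$
which is well defined because $\delta\le q-1=n$. This code is a Reed--Solomon code with $k=q-1-(\delta-1)$. By the BCH bound $d\ge\delta$, and by the Singleton bound $d\le n-k+1=\delta$, so $d=\delta$. Every $m\ge1$ satisfies $m\equiv 0\pmod 1$, so Theorem \ref{T-3.2} gives that $\bC_{(q,q^m-1,\delta,1)}$ has minimum distance exactly $\delta$. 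One could instead exhibit the $\delta$ locators in $\F_q^*$ directly via Theorem \ref{T-3.1}, but the MDS argument avoids any computation.

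\emph{Dimension.} Lemma \ref{L-dim} cannot be used as stated, because its hypothesis $q^{\lceil m/2\rceil}<n$ fails for $m=1$; I also want to avoid checking its range condition. So I argue directly that $C_1,\dots,C_{\delta-1}$ modulo $q^m-1$ are pairwise distinct and each has size $m$.

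For $1\le i\le \delta-1\le q-2$, the elements $iq^j \bmod (q^m-1)$ with $0\le j\le m-1$ are integers whose base-$q$ expansion has the single nonzero digit $i$, placed in position $j$. Two such numbers with different $j$ are therefore different, so $|C_i|=m$. Equivalently, for $0<j<m$ we have
$$0<i(q^j-1)\le (q-1)(q^{m-1}-1)<q^m-1,$$
so $iq^j\not\equiv i \pmod{q^m-1}$. Likewise, different values of $i$ give disjoint cosets, because the nonzero digit differs. Hence $\deg g=m(\delta-1)$ and $k=q^m-1-m(\delta-1)$.

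I expect no real obstacle here. The only care needed is to confirm that the length-$(q-1)$ code is genuinely the Reed--Solomon code, with singleton cosets and $\delta\le n$, so that Theorem \ref{T-3.2} applies with $h=1$.
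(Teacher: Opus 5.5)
Your proposal is correct and has the same structure as the paper's proof: settle $m=1$, then lift to every $m$ with Theorem~\ref{T-3.2} at $h=1$. The differences are minor.

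For the base case, the paper applies Theorem~\ref{T-3.1} directly. It notes that for any $\delta-1$ distinct $x_j\in\F_q^*\setminus\{1\}$, every factor of $S_j$ lies in $\F_q^*$, which uses exactly $\delta-1\le q-2$. You instead use the Reed--Solomon/Singleton argument, which the paper mentions only in its subsequent remark. Both are one line. Since Theorem~\ref{T-3.2} recovers the locators through Theorem~\ref{T-3.1} in either case, nothing is lost.

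For the dimension, the paper simply cites Lemma~\ref{L-dim}. As you observe, its hypothesis $q^{\lceil m/2\rceil}<n$ fails when $m=1$ and $n=q-1$. For $m\ge 2$ one must also check the range condition and that $\lceil(\delta-1)(1-1/q)\rceil=\delta-1$, which the paper leaves implicit. Your base-$q$ digit argument shows that $C_1,\dots,C_{\delta-1}$ are pairwise disjoint of size $m$. It is self-contained and covers every $m\ge 1$ uniformly, so it is slightly more complete than the paper's citation.
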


\begin{proof}
\rm{
    The dimension follows directly from Lemma \ref{L-dim}. By the BCH bound, the minimum distance $d$ is at least $\delta$.

    We first prove that the conclusion on the minimum distance holds for $m=1$.
    By Theorem~\ref{T-3.1}, it suffices to prove that there exist distinct nonzero elements $x_{1}, \ldots, x_{\delta-1} \in \F_{q}^* \setminus \{1\}$ such that

    \begin{equation}\label{e-4.7-1}
        S_j:=\frac{\prod_{k=1, k \neq j}^{\delta-1} (1 - x_k)}{x_j\prod_{k=1, k \neq j}^{\delta-1} (x_j - x_k)} \in \F_{q}^*, \quad 1 \le j \le \delta-1.
    \end{equation}
    Choose $\delta-1$ distinct $x_j\in \F_q^* \setminus \{1\}.$ Then the terms $x_j$, $1-x_j$, and $x_j-x_k$ are all in $\F_q^*$, so $S_j \in \F_q^*$ for $1 \le j \le \delta-1$. Thus the conclusion on the minimum distance is true for $m=1$.  
    By Theorem \ref{T-3.2}, any code $\bC_{(q,q^m-1,\delta,1)}$ with $m>1$ has minimum distance $d=\delta$.
    This completes the proof.
    \qed

}

\end{proof}

\begin{remark}
\rm{
	For $m=1$, $\mathbf{C}_{(q, q-1, \delta, 1)}$ is a Reed-Solomon code and its minimum distance is directly given by $\delta$. The explicit construction of $x_j$ is provided to satisfy $S_j \in \mathbb{F}_q^*$, serving as the necessary base case for the extension to $m>1$ via Theorem \ref{T-3.2}.
}
\end{remark}

\begin{example}
{\rm 
Some examples of the code $\bC_{(q,q^m-1,\delta,1)}$ for $2 \le \delta \le q-1$ are listed below.
\begin{itemize}
\item When $(q, m, \delta)=(3,2,2)$, the code has parameters $[8,6,2]_3$ and is distance-optimal 
according to the codetable \cite{codetable}.
\item When $(q, m, \delta)=(3,3,2)$, the code has parameters $[26,23,2]_3$ and is distance-optimal 
according to the codetable \cite{codetable}.  
\item When $(q, m, \delta)=(4,2,3)$, the code has parameters $[15,11,3]_4$, while the distance-optimal 
linear code in the codetable \cite{codetable} has parameters $[15,11,4]_4$ and is not known to be cyclic.
\item When $(q, m, \delta)=(4,3,3)$, the code has parameters $[63, 57, 3]_4$, while the distance-optimal 
linear code in the codetable \cite{codetable} has parameters $[63,57,4]_4$ and is not known to be cyclic.
\end{itemize}

}
\end{example}

\subsection{The $q$-ary BCH codes $\bC_{(q,q^m-1,\delta,1)}$ with $\delta=q^t+1$}

In this subsection, we show that the $q$-ary BCH code $\bC_{(q,q^m-1,\delta,1)}$ has minimum distance $d=\delta=q^t+1$ under certain condition.  
A polynomial over $\F_q$ is said to be separable if its roots in the algebraic closure of $\F_q$ are distinct. The following lemma will be used in the sequel.

\begin{lemma}\label{L-p^t+1}
	Let $q$ be a power of a prime $p$. For any integer $t \ge 1$, the polynomial $$L(x) = x^{q^t} - x^{q^t-1} + 1 \in \mathbb{F}_q[x]$$
	is separable, and its splitting field is $\F_{q^{pt}}$.
	
\end{lemma}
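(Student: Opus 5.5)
Write $Q=q^t$. Separability is immediate: $L'(x) = -(Q-1)x^{Q-2} = x^{Q-2}$ in characteristic $p$, since $Q\equiv 0 \pmod p$. So the only possible repeated root is $x=0$, and $L(0)=1\neq 0$. Hence $\gcd(L,L')=1$, and $L$ has $Q$ distinct roots.

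For the splitting field, I will transform $L$ into an Artin--Schreier type equation. A root $x$ of $L$ is nonzero, so dividing $x^{Q}-x^{Q-1}+1=0$ by $x^{Q}$ gives
$$1 - y + y^{Q}=0,\qquad y=1/x,$$
that is, $y^{Q}-y+1=0$. The map $x\mapsto 1/x$ is a bijection between the roots of $L$ and those of $A(y)=y^{Q}-y+1$, so the splitting fields coincide. It therefore suffices to show that every root of $A$ lies in $\F_{q^{pt}}$, and that the field they generate is exactly $\F_{q^{pt}}$.

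Let $\sigma(z)=z^{Q}$, the Frobenius generating $\mathrm{Gal}(\overline{\F_q}/\F_{Q})$. If $A(y)=0$, then $\sigma(y)=y-1$. Since $-1\in\F_p$ is fixed by $\sigma$, induction gives $\sigma^k(y)=y-k$. Hence $\sigma^p(y)=y$, i.e.\ $y^{Q^p}=y$, so $y\in\F_{Q^p}=\F_{q^{pt}}$. Conversely, $\sigma^k(y)=y$ exactly when $p\mid k$, so $y$ has degree exactly $p$ over $\F_{q^t}$. The roots are $y_0+c$ with $c$ ranging over the $Q$ elements satisfying $c^{Q}=c$, namely $c\in\F_{q^t}$, so $F=\F_{q^t}(y_0)$.

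It remains to show that $F=\F_{q^{pt}}$ is also the splitting field over $\F_q$. This holds because the splitting field over $\F_q$ is $\F_q(\text{roots})\supseteq\F_{q^t}$ (differences of roots give all $c\in\F_{q^t}$), together with $y_0$. It therefore contains $\F_{q^t}(y_0)=\F_{q^{pt}}$ and is contained in it.

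No step is a real obstacle. The only subtle point is the last one: checking that adjoining the roots to $\F_q$ (rather than to $\F_{q^t}$) already recovers $\F_{q^t}$. The differences of roots settle this.
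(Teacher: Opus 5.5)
Your proof is correct and follows the same route as the paper: separability from $L'(x)=x^{Q-2}$, passage to the reciprocal $y^{Q}-y+1$, and the iteration $y^{Q^k}=y-k$, which shows that each root has degree exactly $p$ over $\F_{q^t}$. Your last step is one the paper's proof skips but genuinely needs: the differences of roots are exactly the elements of $\F_{q^t}$, so the splitting field over $\F_q$ contains $\F_{q^t}$. The paper only tests membership in fields of the form $\F_{Q^k}$, while a single root can generate a smaller field. For instance, with $q=2$ and $t=3$, an element $\omega\in\F_4$ with $\omega^2+\omega+1=0$ satisfies $\omega^8+\omega+1=0$.
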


\begin{proof}
\rm{
	Let $Q = q^t$. Then $L(x) = x^Q - x^{Q-1} + 1$. Since $Q \equiv 0 \pmod p$,
	$$\begin{aligned}
		L'(x) &= Q x^{Q-1} - (Q-1)x^{Q-2} = x^{Q-2}.
	\end{aligned}$$
	The only root of  $L'(x)$ is $x = 0$, and $L(0) = 1 \neq 0$.
	Since $L(x)$ and $L'(x)$ share no common roots, $\gcd(L(x), L'(x)) = 1$. Therefore, $L(x)$ is separable and has no multiple roots.
	
	The reciprocal polynomial of $L(x)$ is
	$$L^*(x) = x^Q L(1/x) = x^Q (x^{-Q} - x^{-(Q-1)} + 1) = 1 - x + x^Q.$$
	The splitting field of $L(x)$ is identical to the splitting field of $L^*(x)$. Let $\alpha$ be a root of $L^*(x)$. Then,
	$$\alpha^Q = \alpha - 1.$$
	
	By induction on an integer $k \ge 1$, we obtain
	$$\begin{aligned}
		\alpha^{Q^k} &= \alpha - k.
	\end{aligned}$$
	
	Thus, $\alpha \in \F_{Q^k}$ holds  if and only if
	$$\alpha^{Q^k} = \alpha \iff \alpha - k = \alpha \iff k \equiv 0 \pmod p.$$
	
	The smallest positive integer $k$ satisfying $k \equiv 0 \pmod{p}$ is $p$. Consequently, the splitting field of $L(x)$ is $\F_{Q^p}=\F_{q^{pt}}$.
	\qed

}
\end{proof}

\begin{theorem}\label{T-q^t+1}
    Let $q$ be a power of a prime $p$ and $t$ a positive integer.  For any $m \equiv 0 \pmod{pt}$, the narrow-sense primitive BCH code $\bC_{(q,q^{m}-1,q^t+1,1)}$ has minimum distance $d=q^t+1$ and dimension $q^{m}-1-m q^{t-1}(q-1)$.

\end{theorem}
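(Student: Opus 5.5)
The plan is to apply Theorem \ref{T-3.1} with $\delta=q^t+1$, using as locators the roots of the polynomial from Lemma \ref{L-p^t+1}. Put $Q=q^t$, so $\delta-1=Q$. Let $x_1,\dots,x_Q$ be the roots of $L(x)=x^Q-x^{Q-1}+1$. By Lemma \ref{L-p^t+1} they are pairwise distinct and lie in $\F_{q^{pt}}$, hence in $\F_{q^m}$ whenever $pt\mid m$. They are nonzero because $L(0)=1$, and none equals $1$ because $L(1)=1$. Taking $x_\delta=1$ as the extra locator then gives $\delta$ pairwise distinct elements of $\F_{q^m}^*$, all of which are powers of a primitive element, i.e.\ of $\beta$ with $n=q^m-1$.

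Next I evaluate the quantities $S_j$ of \eqref{e-0} in closed form, using that $L(x)=\prod_{k=1}^{Q}(x-x_k)$ is monic.
\begin{itemize}
\item The denominator factor is $\prod_{k\neq j}(x_j-x_k)=L'(x_j)$. The computation in the proof of Lemma \ref{L-p^t+1} gives $L'(x)=x^{Q-2}$, so this equals $x_j^{Q-2}$.
\item The numerator is $\prod_{k\neq j}(1-x_k)=L(1)/(1-x_j)=1/(1-x_j)$.
\end{itemize}
Hence
$$S_j=\frac{1}{(1-x_j)\,x_j\,x_j^{Q-2}}=\frac{1}{x_j^{Q-1}-x_j^{Q}}.$$
The relation $L(x_j)=0$ gives exactly $x_j^{Q-1}-x_j^Q=1$, so $S_j=1\in\F_q^*$ for every $1\le j\le Q$.

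Theorem \ref{T-3.1} then produces a codeword of weight $q^t+1$, and indeed every coefficient equals $-c_{i_\delta}$. Together with the BCH bound this gives $d=q^t+1$. Alternatively, one can do this for $m=pt$ and extend to all multiples of $pt$ via Theorem \ref{T-3.2}.

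For the dimension I would invoke Lemma \ref{L-dim}, which gives $m\lceil q^t(1-1/q)\rceil=mq^{t-1}(q-1)$. This step is the only delicate point, because the range condition of Lemma \ref{L-dim} can fail at the boundary. Since $m\ge pt\ge 2t$ we have $\lceil m/2\rceil\ge t$. However, when $m=2t$ (so $p=2$ and $m=pt$) the bound $\lfloor nq^{\lceil m/2\rceil}/(q^m-1)\rfloor$ equals $q^t<q^t+1$. To avoid this, I would observe that $q^t\in C_1$, so the defining set of the code with designed distance $q^t+1$ coincides with that of designed distance $q^t$. Lemma \ref{L-dim} does apply with $\delta=q^t$, and it gives $m\lceil (q^t-1)(1-1/q)\rceil=mq^{t-1}(q-1)$, which is the same value. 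In either case the dimension is $q^m-1-mq^{t-1}(q-1)$.
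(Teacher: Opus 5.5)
Your proof is correct and follows the paper's argument: the roots of $L(x)=x^{q^t}-x^{q^t-1}+1$, together with $1$, serve as the locators, and $S_j=1$ follows from $L'(x)=x^{q^t-2}$ and $L(x_j)=0$; placing the roots directly in $\F_{q^m}$ is just Theorem~\ref{T-3.2} written out. Your treatment of the dimension is more careful than the paper's, which cites Lemma~\ref{L-dim} without noting that its range hypothesis fails for $\delta=q^t+1$ when $p=2$ and $m=2t$; your observation that $q^t\in C_1$, so the code coincides with the one of designed distance $q^t$, correctly covers that boundary case.
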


\begin{proof}
\rm{

	The dimension follows directly from Lemma \ref{L-dim}. By the BCH bound, the minimum distance $d \ge q^t+1$.
	We first prove the conclusion for the case $m=pt$. By Theorem~\ref{T-3.1}, it suffices to prove that there exist distinct nonzero elements $x_{1}, \ldots, x_{q^t} \in \F_{q^m}^* \setminus \{1\}$ such that
	
	\begin{equation}\label{e-4.9-1}
		S_j:=\frac{\prod_{k=1, k \neq j}^{q^t} (1 - x_k)}{x_j\prod_{k=1, k \neq j}^{q^t} (x_j - x_k)} \in \F_{q}^*, \quad 1 \le j \le q^t.
	\end{equation}
	
    Let $L(x) = x^{q^t} - x^{q^t-1} + 1 \in \F_q[x]$.
    By Lemma \ref{L-p^t+1}, $L(x)$ is separable and its splitting field is $\F_{q^{pt}}$.
    Let $x_j=\beta^{i_j}$ for $1 \le j \le q^t$ be the roots of $L(x)$ over $\F_{q^{pt}}$, where $\beta$ is the primitive element of $\F_{q^{pt}}$.
    Since $0$ and $1$ are not the roots of $L(x)$, all roots of $L(x)$ lie in $\F_{q^{pt}}^* \setminus \{1\}$.

    Since $L'(x)=x^{q^t-2}$, we obtain
    \begin{equation*}\label{e-4.9-2}
    	x(1 - x)L'(x) = x^{q^{t}-1}-x^{q^{t}}=1-L(x).
    \end{equation*} 
    Hence, for every root $x_j$, we have
    \begin{equation}\label{e-4.9-3}
    	x_j(1 - x_j)L'(x_j) = 1.
    \end{equation}
    Using the factorization $L(x) = \prod_{k=1}^{q^t} (x - x_k)$, we obtain
	\begin{equation*}\label{e-4.8-2}
		\prod_{{k=1 , k \neq j}}^{q^t} (1 - x_k) = \frac{L(1)}{1 - x_j}, \quad
	 \quad \prod_{{k=1 , k \neq j}}^{q^t} (x_j - x_k) =L'(x_j).
	\end{equation*}
Consequently,  
    \begin{equation*}\label{e-4.9-4}
    	S_j = \frac{\frac{L(1)}{1-x_j}}{x_j L'(x_j)} = \frac{L(1)}{x_j(1 - x_j)L'(x_j)}.
    \end{equation*}

    From $L(1) = 1$ and \eqref{e-4.9-3}, it follows that $S_j = 1\in \F_q^*$ for all $1 \le j \le q^t$. According to Theorem \ref{T-3.1}, the coefficients satisfy $c_{i_j}=-c_{i_\delta}\cdot S_j$. Setting $c_{i_\delta}=-1$ yields $c_{i_j}=1$ for all $1 \le j \le q^t$. Hence,
    $$c(x)=\sum_{j=1}^{q^t} x^{i_j}-1$$ is a codeword of weight $q^t+1$. The minimum distance of $\bC_{(q,q^{pt}-1,q^t+1,1)}$ is $q^t+1$.
    By Theorem \ref{T-3.2}, any code $\bC_{(q,q^{m}-1,\delta,1)}$ with $m \equiv 0 \pmod{pt}$ also has minimum distance $d=\delta$.
    This completes the proof.
    \qed
}
\end{proof}
\begin{remark}
	\rm{
		In \cite{Ding0} the authors conjectured that the BCH code $\bC_{(q,q^m-1,q^{t}+1,1)}$ always has $d=d_{B}$, where $d_B$ denotes the Bose distance of the code, see Conjecture \ref{c-0}.
%		For $t \le m/2$, it was proved in \cite{Ding0} that $d_B=\delta=q^t+1$.
		According to the classical result in \cite[p.~247]{Betten}, $d=\delta$ holds for a narrow-sense BCH code whenever $\delta \mid n$.
		Then Conjecture \ref{c-0} holds when $m \equiv 0 \pmod{2t}$ or $t \equiv 0 \pmod{m-t}$, as both conditions ensure $\delta \mid n$.
		However, the cases where $\delta \nmid n$ remained open.
		Theorem~\ref{T-q^t+1} confirms Conjecture \ref{c-0} for all $m \equiv 0 \pmod{pt}$ where $p$ is the characteristic of $\F_q$. This is achieved by explicitly identifying $H(x)=(x-1)L(x)$ as the locator polynomial of a   codeword with weight $q^t+1$ in $\bC_{(q,q^m-1,q^{t}+1,1)}$.
		It is worth highlighting that $q^t+1 \mid q^m-1$  when $m/pt$ is even but $q^t+1 \nmid q^m-1$ when $m/pt$ is odd.
		Therefore, the codes covered in Theorem \ref{T-q^t+1} include the nontrivial cases where $\delta \nmid n$.

	}
\end{remark}

\begin{example}
\rm{
Examples of the code in Theorem \ref{T-q^t+1} are given in Table \ref{tab-q^t+1}. All numerical examples were 
produced using SageMath.

\begin{itemize}
    \item Let $(q,t,m)=(3,1,3)$, $\delta=4$. Let $\beta$ be a primitive element of $\F_{3^3}$. The roots of $L(x)=x^3-x^2+1$ in its splitting field $\F_{3^3}$ are $\{\beta^{25},\beta^{23},\beta^{17}\}$. Thus, the corresponding nonzero positions are $\{i_1,i_2,i_3\}=\{25,23,17\}$. The polynomial $c(x)=x^{25}+x^{23}+x^{17}+2$ is a codeword of weight $4$ in $\bC_{(3,26,4,1)}$. The code $\bC_{(3,26,4,1)}$ has parameters $[26,20,4]_3$. This code is distance-optimal according to the codetable \cite{codetable}.

    \item
    Let $(q,t,m)=(4,2,4)$, $\delta=17$. Let $\beta$ be a primitive element of $\F_{4^4}$. The splitting field of the polynomial $L(x)=x^{16}+x^{15}+1$ is $\F_{4^4}$. Similarly, we construct  a polynomial $c(x)=x^{248} + x^{241} + x^{227} + x^{199} + x^{144} + x^{143} + x^{132} + x^{124} + x^{72} + x^{66} + x^{62} + x^{36} + x^{33} + x^{31} + x^{18} + x^{9} + 1$, which is a codeword of weight $17$ in $\bC_{(4,255,17,1)}$. The code $\bC_{(4,255,17,1)}$ has parameters $[255,207,17]_4$ and is a best known code according to the codetable \cite{codetable}.

    \item
    Let  $(q,t,m)=(5,1,5)$, $\delta=6$. Let $\beta$ be a primitive element of $\F_{5^5}$. The roots of $L(x)=x^5-x^4+1$ in its splitting field $\F_{5^5}$ are $\{\beta^{2342},\beta^{2338},\beta^{2318},\beta^{2218},\beta^{1718}\}$. Thus, the corresponding nonzero positions are  $\{i_1,i_2,i_3,i_4,i_5\}=\{2342, 2338, 2318, 2218, 1718\}$. The polynomial $c(x)=x^{2342} + x^{2338} + x^{2318} + x^{2218} + x^{1718} + 4$ is a codeword of weight $6$ in $\bC_{(5,5^5-1,6,1)}$. The code $\bC_{(5,5^5-1,6,1)}$ has parameters $[3124, 3104, 6]_5$.

\end{itemize}

}
\end{example}

\begin{longtable}{|l|l|l|l|l|}
	\caption{\label{tab-q^t+1} Examples of the code in Theorem \ref{T-q^t+1}}\\
	\hline
	$q$ & $t$ & $m$ & $\bC_{(q,q^{m}-1,q^t+1,1)}$ & $d_{best}$ \\
	\hline
	\endfirsthead
	
	\hline
	$q$ & $t$ & $m$ & $\bC_{(q,q^{m}-1,q^t+1,1)}$ & $d_{best}$ \\
	\hline
	\endhead
	
	\hline
	\endfoot
	
	\multirow{7}{*}{2}
	& \multirow{3}{*}{1} & 4 & $[15,11,3]_2$  &  3 \\ \cline{3-5}
	& & 6 & $[63,57,3]_2$  &  3 \\ \cline{3-5}
	& & 8 & $[255,247,3]_2$  &  3 \\ \cline{2-5}
	& \multirow{2}{*}{2} & 4 & $[15,7,5]_2$  &  5 \\ \cline{3-5}
	& & 8 & $[255,239,5]_2$  &  5 \\ \cline{2-5}
	
	& 3 & 6 & $[63,39,9]_2$  &  9\\ \cline{2-5}
	& 4 & 8 & $[255,191,17]_2$  &  17\\ \hline

	\multirow{3}{*}{3}
	& \multirow{2}{*}{1} & 3 & $[26,20,4]_3$   &  4 \\ \cline{3-5}
	&  & 6 & $[728,716,4]_3$  &  / \\ \cline{2-5}
	
	& 2 & 6 & $[728,692,10]_3$  &  / \\ \hline
		
	\multirow{3}{*}{4}
	& \multirow{2}{*}{1} & 2 & $[15,9,5]_4$   &  5 \\ \cline{3-5}
	&  & 4 & $[255,243,5]_4$  &  5 \\ \cline{2-5}
	
	& 2 & 4 & $[255,207,17]_4$  &  17 \\ \hline
	
	5 & 1 & 5 & $[3124,3104,6]_5$  &  / \\ \hline
	
\end{longtable}

\section{A family of non-primitive BCH codes with $d=\delta$}\label{s5} 

In this section, we extend the method of constructing the locator polynomial developed in Theorem \ref{T-q^t+1} to the non-primitive cases.
Let $p$ be prime and $q=p^e$ with $p \nmid e$ and let $\lambda$ be a divisor of $q-1$.
We construct an infinite family of non-primitive BCH codes $\bC_{(q,\frac{q^p-1}{\lambda},p+1,1)}$ with $d=\delta=p+1$.

The main idea is to use the polynomial $Q(x)=x^p+x^{p-1}+ \dots + x -1\in\F_q[x]$. We will show that $H(x)=(x-1)Q(x)$ corresponds to the locator polynomial of a codeword of weight $p+1$ in $\bC_{(q,\frac{q^{p}-1}{\lambda},p+1,1)}$. We begin with several auxiliary lemmas.

\begin{lemma}\label{L-4-1}
	Let $p$ be a prime and let $q=p^e$ with $p \nmid e$. Then the polynomial $A(x)=x^p-x-1\in\F_q[x]$ is irreducible and separable over $\F_q$.
\end{lemma}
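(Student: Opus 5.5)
Separability is immediate from the derivative, exactly as in Lemma \ref{L-p^t+1}: since $p=0$ in $\F_q$, we have $A'(x)=px^{p-1}-1=-1$, which has no roots, so $\gcd(A,A')=1$ and $A(x)$ has $p$ distinct roots in an algebraic closure. Irreducibility will follow from the Frobenius action on a root, as in the splitting-field argument of Lemma \ref{L-p^t+1}.

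Let $\alpha$ be a root of $A(x)$, so $\alpha^p=\alpha+1$. Iterating $j$ times gives
\[
\alpha^{p^j}=\alpha+j \qquad (j\ge 0),
\]
because $(\alpha+j)^p=\alpha^p+j^p=\alpha+1+j$ for $j\in\F_p$. Applying this with $j=e$ shows that the $q$-Frobenius acts by
\[
\alpha^{q}=\alpha^{p^e}=\alpha+e,
\]
where $e$ is read in $\F_p\subseteq\F_q$. Hence $\alpha^{q^k}=\alpha+ke$ for all $k\ge 1$. The conjugates of $\alpha$ over $\F_q$ are therefore $\alpha,\alpha+e,\alpha+2e,\dots$.

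This is where the hypothesis $p\nmid e$ is used: because $e$ is a unit in $\F_p$, the elements $\alpha+ke$ for $k=0,1,\dots,p-1$ are pairwise distinct and run through all of $\alpha+\F_p$. In particular $\alpha^{q^k}=\alpha$ if and only if $p\mid k$, so $\alpha$ has exactly $p$ distinct conjugates over $\F_q$. Its minimal polynomial over $\F_q$ therefore has degree $p$, equivalently $[\F_q(\alpha):\F_q]=p$. Since this minimal polynomial divides $A(x)$ and both are monic of degree $p$, they coincide, and $A(x)$ is irreducible over $\F_q$. Its roots are $\alpha+\F_p$ and its splitting field is $\F_{q^p}$.

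The only delicate point is reducing $\alpha^{p^e}$ correctly to $\alpha+e$, and noticing that irreducibility fails exactly when $p\mid e$: then $\alpha\in\F_q$ and $A(x)$ splits into linear factors over $\F_q$.
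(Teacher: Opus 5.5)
Your proof is correct, and for irreducibility it takes a different route from the paper. The paper handles separability exactly as you do. For irreducibility it argues by contradiction with the trace: if some $\alpha\in\F_q$ satisfied $\alpha^p-\alpha=1$, applying $\Tr_{\F_q/\F_p}$ would give $0=\Tr_{\F_q/\F_p}(1)=e\not\equiv 0 \pmod p$. So $A(x)$ has no root in $\F_q$, and the paper then concludes irreducibility ``since $\deg(A(x))=p$ is prime''.

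That last step does not hold for general polynomials of prime degree $p\ge 5$. Over $\F_2$, for example, $x^5+x^4+1=(x^2+x+1)(x^3+x+1)$ has no roots but is reducible. It is only rescued by the Artin--Schreier structure: the roots are $\alpha+\F_p$, so a monic factor of degree $d$ with $0<d<p$ would have root sum $d\alpha+c$ with $c\in\F_p$. That sum lies in $\F_q$, which forces $\alpha\in\F_q$, and the paper leaves this point unstated.

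Your Frobenius-orbit computation $\alpha^{q^k}=\alpha+ke$ avoids the issue entirely. It gives $[\F_q(\alpha):\F_q]=p$ directly, so irreducibility never depends on the degree being prime. It also delivers the root set $\alpha+\F_p$ and the splitting field $\F_{q^p}$, which is exactly what Theorem \ref{T-p+1/lambda} later uses. The trace argument is a quick way to see why $p\nmid e$ rules out a root in $\F_q$, but on its own it only excludes linear factors.
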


\begin{proof}
	\rm{
		Since $A'(x) = p x^{p-1} - 1=-1$, we have  $\gcd(A(x), A'(x)) = \gcd(A(x), -1) = 1$, and thus $A(x)$ is separable over $\F_q$.
		
		To prove the irreducibility, assume for contradiction that $A(\alpha)=0$ for some  $\alpha \in \F_q$. Then $\alpha^p-\alpha=1$. Recall that the trace map $\text{Tr}_{\mathbb{F}_q/\mathbb{F}_p}$ is defined as $\text{Tr}_{\mathbb{F}_q/\mathbb{F}_p}(x) = x + x^p + \dots + x^{p^{e-1}}$ for any $x \in \mathbb{F}_q$. Apply the trace map $\Tr_{\F_q/\F_p}$ to both sides. As $\Tr_{\F_q/\F_p}(\alpha^p)=\Tr_{\F_q/\F_p}(\alpha)$ for every $\alpha\in\F_q$, we have
		\begin{equation}\label{e-5.1-L-0}
			\Tr_{\F_q/\F_p}(\alpha^p-\alpha)=0.
		\end{equation}
		
		However, since $p \nmid e$,
		$$
		\Tr_{\F_q/\F_p}(1)=e \not\equiv 0 \pmod{p},
		$$
		a contradiction with \eqref{e-5.1-L-0}.
		Thus, $A(x)$ has no root in $\F_q$. Since $\deg (A(x))=p$ is prime, $A(x)$ is  irreducible.
		\qed
	}
\end{proof}

\begin{lemma}\label{L-4-2}
	Let $p$ be a prime and $q=p^e$ with $p \nmid e$. Then the polynomial $Q(x)=x^p+x^{p-1}+ \dots + x -1\in\F_q[x]$ is irreducible and separable over $\F_q$.
\end{lemma}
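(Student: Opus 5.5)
Write $Q(x)=\frac{x^{p+1}-1}{x-1}-2$ for $x\neq 1$, i.e. $(x-1)Q(x)=x^{p+1}-2x+1$. The plan is to relate $Q$ to $A(x)=x^p-x-1$ from Lemma~\ref{L-4-1} by a Möbius substitution, so that irreducibility transfers directly.

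\textbf{Separability.} Let $H(x)=(x-1)Q(x)=x^{p+1}-2x+1$. In characteristic $p$ we have $H'(x)=(p+1)x^p-2=x^p-2$. A common root $\gamma$ of $H$ and $H'$ satisfies $\gamma^p=2$, and then $H(\gamma)=2\gamma-2\gamma+1=1\neq 0$. So $H$ is separable, and hence so is its factor $Q$.

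\textbf{Irreducibility.} For $p=2$ we get $Q(x)=x^2+x+1=A(x)$ over characteristic $2$, and Lemma~\ref{L-4-1} applies directly. So take $p$ odd.

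Let $\gamma$ be a root of $Q$. Then $\gamma\neq 1$, since $Q(1)=p-1=-1\neq 0$, and $\gamma\neq 0$, since $Q(0)=-1$. From $H(\gamma)=0$ we get $\gamma^{p+1}=2\gamma-1$.

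Set $y=1/(\gamma-1)$, so that $\gamma=1+1/y$. Then
\[
\gamma^p=1+\frac{1}{y^p},\qquad \gamma^{p+1}=\Bigl(1+\frac{1}{y^p}\Bigr)\Bigl(1+\frac{1}{y}\Bigr),\qquad 2\gamma-1=1+\frac{2}{y}.
\]
Multiplying $\gamma^{p+1}=2\gamma-1$ through by $y^{p+1}$ gives
\[
(y^p+1)(y+1)=y^{p+1}+2y^p,
\]
which simplifies to $y^p-y-1=0$. Thus $y=1/(\gamma-1)$ is a root of $A$.

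By Lemma~\ref{L-4-1}, $A$ is irreducible over $\F_q$, so $[\F_q(y):\F_q]=p$. Since $\F_q(\gamma)=\F_q(y)$, the minimal polynomial of $\gamma$ has degree $p=\deg Q$. Hence $Q$ is that minimal polynomial up to a scalar, and $Q$ is irreducible.

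The only step that needs care is the substitution algebra, namely checking that the Frobenius identity $(1+1/y)^p=1+1/y^p$ reduces the relation exactly to $A(y)=0$. The separability part is routine.
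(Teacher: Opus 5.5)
Your proof is correct and is essentially the paper's argument. The substitution $\gamma = 1 + 1/y$ is exactly the paper's translation $x \mapsto x+1$, which gives $R(x)=Q(x+1)=x^p+x^{p-1}-1$, followed by passing to the reciprocal $x^pR(1/x)=-A(x)$. You carry this out on a root and conclude via $[\F_q(\gamma):\F_q]=p$, where the paper transforms the polynomials directly.

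Your separability check through $(x-1)Q(x)$ is a harmless variant of the paper's $\gcd(R,R')=1$. The separate treatment of $p=2$ is unnecessary, since the substitution never uses that $p$ is odd.
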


\begin{proof}
	\rm{
		Consider the translation $x\mapsto x+1$. Then,
		$$Q(x+1) = \frac{(x+1)^{p+1}-1}{(x+1)-1} - 2.$$ 		
		In characteristic $p$, $(x+1)^{p+1}=x^{p+1}+x^p+x+1$. Hence,
		$$Q(x+1) = \frac{x^{p+1}+x^p + x}{x} -2 = x^p + x^{p-1} - 1.$$ 		
		Let $R(x):=Q(x+1)=x^p + x^{p-1} - 1$. The reciprocal polynomial of $R(x)$ is
		$$
		R^*(x)=x^p R(\frac{1}{x})=x^p \left[ \left(\frac{1}{x}\right)^p + \left(\frac{1}{x}\right)^{p-1} - 1 \right]=-(x^p-x-1)=-A(x).
		$$
		
		Since irreducibility is preserved under reciprocality and $A(x)$ is irreducible by Lemma \ref{L-4-1}, $R(x)$ is irreducible.		
		Moreover, $R'(x)=-x^{p-2}\neq 0$, so $\gcd(R(x),R'(x))=1$ and $R(x)$ is separable.
		Note that translation preserves separability and irreducibility. Hence $Q(x)$ is  irreducible and separable over $\F_q$.
		\qed
		
	}
\end{proof}

\begin{theorem}\label{T-p+1/lambda}
    Let $p$ be an odd prime, $q=p^e$ with $p \nmid e$ and $\lambda \mid q-1$. The narrow-sense BCH code $\bC_{(q,\frac{q^p-1}{\lambda},p+1,1)}$ has minimum distance $d=p+1$ and dimension
    $$
    k=\begin{cases}
    	\frac{q^p-1}{\lambda }-p(p-1),  & \text{ if } e= 1,  \\
    	\frac{q^p-1}{\lambda }-p^2,  & \text{ if } e \ge 2.
    \end{cases}
    $$
    %    [\frac{q^p-1}{\lambda},\frac{q^p-1}{\lambda}-1-p(p-1),p+1]_q.
\end{theorem}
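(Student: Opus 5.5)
The minimum distance will follow from Theorem \ref{T-3.1}, applied with the locator polynomial $H(x)=(x-1)Q(x)$, where $Q(x)=x^p+x^{p-1}+\dots+x-1$. The dimension will come from Lemma \ref{L-dim}. The BCH bound already gives $d\ge p+1$. Put $n=\frac{q^p-1}{\lambda}$ and $\delta=p+1$.

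\textbf{Step 1: $\ord_n(q)=p$.} Since $\lambda\mid q-1$, we have $n\ge \frac{q^p-1}{q-1}>q^{p-1}-1$. Hence $n\nmid q^k-1$ for $k<p$, while $n\mid q^p-1$. So $\beta$ lives in $\F_{q^p}$.

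\textbf{Step 2: verify the hypotheses of Lemma \ref{L-dim}.}
\begin{itemize}
\item $n>q^{\lceil p/2\rceil}$: this follows from $n>q^{p-1}\ge q^{(p+1)/2}$ for odd $p\ge 3$.
\item $p+1\le \lfloor nq^{\lceil p/2\rceil}/(q^p-1)\rfloor$: the right side is $\lfloor q^{(p+1)/2}/\lambda\rfloor$. This is at least $\lfloor q^{(p+1)/2}/(q-1)\rfloor\ge q+1\ge p+1$, using $q^{(p+1)/2}\ge q^2$.
\end{itemize}
Lemma \ref{L-dim} then gives $k=n-p\lceil p(1-1/q)\rceil$. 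If $e=1$ then $q=p$ and $\lceil p-1\rceil=p-1$. If $e\ge2$ then $p/q<1$, so $\lceil p-p/q\rceil=p$. These are exactly the two cases of the claimed dimension.

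\textbf{Step 3: the roots of $Q$ are distinct $n$-th roots of unity, all $\neq 0,1$.} By Lemma \ref{L-4-2}, $Q$ is irreducible and separable of degree $p$ over $\F_q$. Its roots $x_1,\dots,x_p$ are therefore distinct and lie in $\F_{q^p}$. They are the Galois conjugates of one root, so $\N_{\F_{q^p}/\F_q}(x_j)=\prod_k x_k=(-1)^p\cdot(-1)=1$, because $p$ is odd. Thus $x_j^{(q^p-1)/(q-1)}=1$. Since $\frac{q^p-1}{q-1}$ divides $n$ (as $\lambda\mid q-1$), each $x_j$ is a power of $\beta$. Finally $Q(0)=-1\neq0$ and $Q(1)=p-1=-1\ne 0$.

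\textbf{Step 4: compute the $S_j$ of \eqref{e-0}.} Take $x_\delta=1$. Exactly as in the proof of Theorem \ref{T-q^t+1},
\[
S_j=\frac{Q(1)}{x_j(1-x_j)Q'(x_j)}.
\]
Expanding, $H(x)=(x-1)Q(x)=x^{p+1}-2x+1$. At a root $x_j$ of $Q$ we have $H'(x_j)=(x_j-1)Q'(x_j)$, and $H'(x)=(p+1)x^p-2=x^p-2$ in characteristic $p$. Hence
\[
x_j(1-x_j)Q'(x_j)=-x_jH'(x_j)=-(x_j^{p+1}-2x_j)=1,
\]
using $H(x_j)=0$. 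With $Q(1)=-1$ this gives $S_j=-1\in\F_q^*$ for all $j$. Theorem \ref{T-3.1} then produces a codeword of weight $p+1$ (all coefficients equal up to sign), so $d=p+1$.

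The main obstacle is Step 3: showing that the roots of $Q$ are genuinely $n$-th roots of unity when $\lambda>1$, i.e.\ that they lie in the subgroup of order $n$ of $\F_{q^p}^*$. I handle this through the norm computation, which relies on $p$ being odd. The hypothesis $p\nmid e$ enters only through the irreducibility in Lemma \ref{L-4-2}; irreducibility is exactly what makes all roots conjugate, so that the norm of each root equals the full product of the roots.
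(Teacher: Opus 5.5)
Your proof is correct and follows the paper's argument. The norm computation, using that $p$ is odd and $Q(0)=-1$, places the roots of the irreducible $Q$ in the subgroup of order $n$; the identity $x_j(1-x_j)Q'(x_j)=1$ then gives $S_j=-1$ in Theorem~\ref{T-3.1}, and the dimension comes from Lemma~\ref{L-dim}. Deriving that identity through $H(x)=x^{p+1}-2x+1$ is equivalent to the paper's $x(1-x)Q'(x)=1+Q(x)$, and your explicit checks of $\ord_n(q)=p$ and of the hypotheses of Lemma~\ref{L-dim} supply details the paper treats as ``direct.''
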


\begin{proof}
\rm{
	The dimension follows directly from Lemma \ref{L-dim}. By the BCH bound, the minimum distance $d$ is at least $p+1$. To show $d=p+1$, it suffices to prove that there exist distinct nonzero elements $x_j=\gamma^{i_j}$ for $1 \le j \le p$, where $\gamma$ is an $n$-th primitive root in $\F_{q^p}$, such that
	
	\begin{equation}\label{e-5.1-0}
		S_j:=\frac{\prod_{k=1, k \neq j}^{p} (1 - x_k)}{x_j\prod_{k=1, k \neq j}^{p} (x_j - x_k)} \in \F_{q}^*, \quad 1 \le j \le p.
	\end{equation}
	
	Let $Q(x)=x^p+x^{p-1}+ \dots + x -1\in\F_q[x]$.  By Lemma \ref{L-4-2}, $Q(x)$ is irreducible and separable over $\F_q$, then its splitting field is $\F_{q^{p}}$.
	Let $x_j=\beta^{i_j}$ for $1 \le j \le p$ be the roots of $Q(x)$ in $\F_{q^{p}}$, where $\beta$ is a primitive element of $\F_{q^{p}}$.
	
	Since $Q(x)$ is a monic irreducible polynomial of degree $p$ over $\F_q$, the norm of any root $x_j$ relative to the extension $\F_{q^p}/\F_q$ is equal to the product of all roots.
	Recall that for any $y \in \mathbb{F}_{q^p}$, the norm $\N_{\mathbb{F}_{q^p}/\mathbb{F}_q}(y)$ is defined as $y^{1+q+\dots+q^{p-1}}$.
%	Recall that for any $y \in \mathbb{F}_{q^p}$, the norm is defined as $y^{1+q+\dots+q^{p-1}}$.
    Thus, by Vieta's formula,
    $$
    \N_{\F_{q^p / \F_q}}(x_j)= x_j^{\frac{q^p-1}{q-1}} = \prod_{k=1}^p x_k=(-1)^p\cdot Q(0).
    $$
    Since $p$ is an odd prime and $Q(0)=-1$, $\N_{\F_{q^p / \F_q}}(x_j)=1$, we have
    $$x_j^{\frac{q^p-1}{q-1}}=\beta^{i_j \cdot \frac{q^p-1}{q-1}}=1.$$
    Since $\beta$ has multiplicative order $q^p-1$, it follows that $(q-1) \mid i_j$ for each $i_j$, and thus $\lambda \mid i_j$ for any $\lambda \mid q-1$. \\

    Let $n=\frac{q^p-1}{\lambda}$, and let $\gamma=\beta^\lambda$, which is a primitive $n$-th root of unity. Since $\lambda \mid i_j$, it follows that $\{x_1,\dots,x_p\}=\{\beta^{i_1}, \dots, \beta^{i_p}\}=\{\gamma^{i_1/\lambda}, \dots, \gamma^{i_p/\lambda}\}$. 	
	A direct computation shows that
	\begin{equation}\label{e-5.1-2}
		x(1 - x)Q'(x) =1+Q(x).
	\end{equation} 	
	Evaluating \eqref{e-5.1-2} at any root $x_j$ gives
	\begin{equation}\label{e-5.1-3}
		x_j(1 - x_j)Q'(x_j) = 1.
	\end{equation} 
	Using the factorization $Q(x) = \prod_{k=1}^{p} (x - x_k)$, we obtain
	\begin{equation*}\label{e-4.8-2-10}
		\prod_{{k=1 , k \neq j}}^{p} (1 - x_k) = \frac{Q(1)}{1 - x_j}, \quad \quad 
		\prod_{{k=1 , k \neq j}}^{p} (x_j - x_k) =Q'(x_j).
	\end{equation*}
Hence, we have 
	\begin{equation*}\label{e-5.1-4}
		S_j = \frac{\frac{Q(1)}{1-x_j}}{x_j Q'(x_j)} = \frac{Q(1)}{x_j(1 - x_j)Q'(x_j)}.
	\end{equation*} 
	From $Q(1) = -1$ and \eqref{e-5.1-3}, it follows that $S_j = -1\in \F_q^*$ for all $1 \le j \le p$.
	According to Theorem \ref{T-3.1}, the coefficients satisfy $c_{i_j}=-c_{i_\delta}\cdot S_j$. Setting $c_{i_\delta}=1$ yields $c_{i_j}=1$ for all $1 \le j \le p$.
    Hence,
	$$c(x)=\sum_{j=1}^p x^{i_j/\lambda}+1$$ is a codeword of weight $p+1$ in $\bC_{(q,\frac{q^p-1}{\lambda},p+1,1)}$. The minimum distance of $\bC_{(q,\frac{q^p-1}{\lambda},p+1,1)}$ is $p+1$.
    This completes the proof.
    \qed
}
\end{proof}

\begin{example}
\rm{
Examples of the code in Theorem \ref{T-p+1/lambda} are given in Table \ref{tab-p+1/lambda}. All numerical examples were computed using SageMath.

\begin{itemize}
    \item Let $(p,q,\lambda)=(3,3,2)$ and let $\beta$ be a primitive element of $\F_{3^3}$. The roots of $Q(x)=x^3+x^2+x-1$ in its splitting field $\F_{3^3}$ are $\{\beta^2,\beta^6,\beta^{18}\}$. Thus, the corresponding nonzero positions are $\{i_1,i_2,i_3\}=\{1,3,9\}$. The polynomial $c(x)=x^{9}+x^3+x^1+1$ is a codeword of weight $4$ in $\bC_{(3,13,4,1)}$. The code $\bC_{(3,13,4,1)}$ has parameters $[13,7,4]_3$. The corresponding distance-optimal code has parameters $[13,7,5]_3$ according to the codetable \cite{codetable}.

    \item Let $(p,q,\lambda)=(5,5,4)$ and let $\beta$ be a primitive element of $\F_{5^5}$. The roots of $Q(x)=x^5+x^4+x^3+x^2+x-1$ in its splitting field $\F_{5^5}$ are $\{\beta^4,\beta^{20},\beta^{100},\beta^{500},\beta^{2500}\}$. Thus, the corresponding nonzero positions are $\{i_1,i_2,i_3,i_4,i_5\}=\{1,5,25,125,625\}$. The polynomial $c(x)=x^{625} + x^{125} + x^{25} + x^5 + x + 1$ is a codeword of weight $6$ in $\bC_{(5,781,6,1)}$. The code $\bC_{(5,781,6,1)}$ has parameters $[781,761,6]_5$.

    \item
    Let $(p,q,\lambda)=(3,9,8)$ and let $\beta$ be a primitive element of $\F_{9^3}$. The roots of $Q(x)=x^3+x^2+x-1$ in its splitting field $\F_{9^3}$ are $\{\beta^{56},\beta^{168},\beta^{504}\}$. Thus, the corresponding nonzero positions are  $\{i_1,i_2,i_3\}=\{7,21,63\}$. The polynomial $c(x)=x^{63}+x^{21}+x^{7}+1$ is a codeword of weight $4$ in $\bC_{(9,91,4,1)}$. The code $\bC_{(9,91,4,1)}$ has parameters $[91, 82, 4]_9$. The corresponding best known code has parameters $[91, 82, 5]_9$ according to the codetable \cite{codetable}.
\end{itemize}

}
\end{example}

\begin{longtable}{|l|l|l|l|}
    \caption{\label{tab-p+1/lambda} Examples of the code in Theorem \ref{T-p+1/lambda}}\\ \hline
    $q$& $\lambda$ & $\bC_{(q,(q^p-1)/\lambda,p+1,1)}$ & $d_{best}$ \\  \hline
    % 3 & $[26,20,4]_3$     &  4 \\

    3 & 2 & $[13,7,4]_3$     &  5 \\ \hline
    % 5 & $[3124,3104,6]_5$   &  / \\
    \multirow{2}{*}{5} & 4 & $[781,761,6]_5$   &  / \\ \cline{2-4}
    & 2 & $[1562,1542,6]_5$   &  / \\ \hline

    % 9 & $[728,719,4]_8$   &  - \\
    \multirow{3}{*}{9} & 8 & $[91,82,4]_9$   &  5 \\ \cline{2-4}
    & 4 & $[182,173,4]_9$   &  / \\ \cline{2-4}
    & 2 & $[364,355,4]_9$   &  / \\ \hline
\end{longtable}

\section{Summary and concluding remarks}\label{s6} 

In this paper, we settled the minimum distances of a number of infinite families of narrow-sense primitive 
BCH codes and an infinite family of narrow-sense non-primitive BCH codes. In addition,   we confirmed a conjecture in \cite{Ding0} about the minimum distance of 
$\bC_{(q, q^m-1, q^t+1, 1)}$ under the condition $m \equiv 0 \pmod{pt}$, where $p$ denotes the characteristic of $\F_q$.   A summary of the main contributions of this paper was given in Section \ref{sec-contribu}.  

The minimum distances of these families of BCH codes were determined by using Theorems \ref{T-3.1} and \ref{T-3.2} via the locator polynomial method.   It will be good if this method can be generalized to other classes of BCH codes and be used to resolve the remaining cases of Conjecture \ref{c-0}. 

Notice that the locator polynomial method has been used as a tool for finding the true minimum distance of 
cyclic codes for many years (see, e.g., \cite{ACS92,MScode}).  However, it is not easy to use this method. 
Even if a BCH code has a conjectured minimum distance $d$,  it could be very difficult to confirm the conjectured minimum distance with the locator polynomial method due to the difficulty in finding a codeword of weight $d$ in the BCH code.  This explains why Conjecture \ref{c-0} is still open in certain cases.

\section*{Acknowledgements}
The authors are very grateful to the Associate Editor, Prof.  Ferruh \"Ozbudak,  and the referees for their comments that improved the presentation of this paper.

\end{document}